\numberwithin{equation}{section}
\newfont{\msi}{msbm8 scaled \magstephalf}  
\newfont{\msii}{msbm6 scaled \magstephalf}  
\newtheorem{definition}{Definition}[section]
\newtheorem{theorem}[definition]{Theorem}
\newtheorem{proposition}[definition]{Proposition}
\newtheorem{corollary}[definition]{Corollary}
\newtheorem{remarkth}[definition]{Remark}
\newtheorem{example}[definition]{Example}
\newenvironment{remark}{\begin{remarkth}\upshape}{\hfill$\diamond$\end{remarkth}}
\renewcommand{\emph}[1]{{\bfseries\itshape{#1}}}
\newcommand{\comp}{\makebox[7pt]{\raisebox{1.5pt}{\tiny $\circ$}}}
\def\a{\mathrm{ a}}
\newcommand{\R}{\mathbb{R}} 
\newcommand{\lcf}{\lbrack\! \lbrack}
\newcommand{\rcf}{\rbrack\! \rbrack}
\newcommand\prol{\@ifstar{\@proldf}{\@prolpf}} 
\def\@prolpf{\@ifnextchar[{\@prolpf@wrt}{\@prolpf@}}
\def\@prolpf@wrt[#1]#2{\@ifnextchar[{\@prolpf@wrt@at{#1}{#2}}{\@prolpf@wrt@{#1}{#2}}}
\def\@prolpf@wrt@at#1#2[#3]{\prolsymbol^{#1}_{#3}#2}
\def\@prolpf@wrt@#1#2{\prolsymbol^{#1}#2}
\def\@prolpf@#1{\@ifnextchar[{\@prolpf@at{#1}}{\@prolpf@@{#1}}}
\def\@prolpf@at#1[#2]{\prolsymbol_{#2}#1}
\def\@prolpf@@#1{\prolsymbol#1}
\def\@proldf{\@ifnextchar[{\@proldf@wrt}{\@proldf@}}
\def\@proldf@wrt[#1]#2{\@ifnextchar[{\@proldf@wrt@at{#1}{#2}}{\@proldf@wrt@{#1}{#2}}}
\def\@proldf@wrt@at#1#2[#3]{\prolsymbol^{*#1}_{#3}#2}
\def\@proldf@wrt@#1#2{\prolsymbol^{*#1}#2}
\def\@proldf@#1{\@ifnextchar[{\@proldf@at{#1}}{\@proldf@@{#1}}}
\def\@proldf@at#1[#2]{\prolsymbol^*_{#2}#1}
\def\@proldf@@#1{\prolsymbol^*#1}
\def\prolsymbol{\mathcal{L}}
\newcommand{\X}{\mathcal{X}}
\newcommand{\Y}{\mathcal{Y}}
\newcommand{\V}{\mathcal{V}}
\newcommand{\T}{\mathcal{T}}
\renewcommand{\P}{\mathcal{P}}
\newcommand{\A}{\mathcal{A}}
\newcommand{\M}{\mathcal{M}}
\def\B{{\mathcal B}}
\def\bidual{{\widetilde{\A}}}
\def\dual{{\A^+}}
\newcommand{\J}{\mathcal{J}}
\def\lcf{\lbrack\! \lbrack}
\def\rcf{\rbrack\! \rbrack}
\newcommand{\U}{{\mathcal{U}}}
\newcommand{\pr}{\mathrm{pr}}
\begin{document}

\title[Constrained mechanics on Lie affgebroids]
{Variational constrained mechanics on Lie affgebroids}

\author[J.\ C.\ Marrero]{J.\ C.\ Marrero}
\address{J.\ C.\ Marrero:
Departamento de Matem\'atica Fundamental y Unidad Asociada
ULL-CSIC Geometr\'{\i}a Diferencial y Mec\'anica Geom\'etrica, Facultad
de Ma\-te\-m\'a\-ti\-cas, Universidad de la Laguna, La Laguna,
Tenerife, Canary Islands, Spain} \email{jcmarrer@ull.es}

\author[D.\ Mart\'{\i}n de Diego]{D. Mart\'{\i}n de Diego}
\address{D.\ Mart\'{\i}n de Diego:
Unidad Asociada ULL-CSIC Geometr\'{\i}a Diferencial y Mec\'anica
Geom\'etrica, Instituto de Ciencias Matem\'aticas
(CSIC-UAM-UC3M-UCM), Serrano 123, 28006 Madrid, Spain}
\email{d.martin@imaff.cfmac.csic.es}

\author[D.\ Sosa]{D.\ Sosa}
\address{D.\ Sosa:
Departamento de Econom\'{\i}a Aplicada y Unidad Asociada ULL-CSIC
Geo\-me\-tr\'{\i}a Diferencial y Mec\'anica Geom\'etrica, Facultad de
CC. EE. y Empresariales, Universidad de La Laguna, La Laguna,
Tenerife, Canary Islands, Spain and Department of Mathematics,
Purdue University, 150 N. University Street, West Lafayette,
Indiana 47907-2067, USA} \email{dnsosa@ull.es and
dsosa@math.purdue.edu}

\keywords{Lie algebroids, Lie affgebroids, Lagrangian Mechanics,
Hamiltonian Mechanics, AV-bundles, aff-Poisson brackets,
time-dependent mechanics, constraint algorithm, vakonomic
mechanics, affine constraints, variational calculus.}

\subjclass[2000]{17B66, 37J60, 70F25, 70G45, 70G75, 70H30}

\begin{abstract}
In this paper we discuss variational constrained mechanics
(vakonomic mechanics) on Lie affgebroids. We obtain the dynamical
equations and the aff-Poisson bracket associated with a vakonomic
system on a Lie affgebroid ${\mathcal A}$. We devote special
attention to the particular case when the nonholonomic constraints
are given by an affine subbundle of ${\mathcal A}$ and we discuss
the variational character of the theory. Finally, we apply the
results obtained to several examples.

\end{abstract}

\maketitle

\section{Introduction}
Lie algebroids have deserved a lot of interest in recent years
(from a theoretical and applied point of view). In the context of
Mechanics, an ambitious program was proposed by Weinstein
\cite{We} in order to develop geometric formulations of the
dynamical behavior of Lagrangian and Hamiltonian systems on Lie
algebroids. In the last years, this program has been actively
developed by many authors. In fact, a Klein's formalism for
unconstrained Lagrangian systems on Lie algebroids has been
discussed and a symplectic formulation of Hamiltonian mechanics on
these objects has been developed (see \cite{LMM,M1,M2,PoPo}). The
main notion is that of prolongation of a Lie algebroid over a map
introduced by Higgins and Mackenzie \cite{HM}. An alternative
approach, using the linear Poisson structure on the dual bundle of
a Lie algebroid, was discussed in \cite{GrGrUr3}.

An interesting kind of mechanical systems are those subject to
external linear constraints. For these systems, one may derive the
dynamical equations using the Lagrange-D'Alembert principle
(nonholonomic mechanics) or using a constrained variational
principle (vakonomic mechanics). The resultant equations are, in
ge\-ne\-ral, different. Constrained Lagrangian systems
(variational or not) have application in many different areas:
engineering, optimal control theory, mathematical economics,
sub-Riemannian geometry, motion of microorganisms, etc. For a
geometrical treatment of standard mechanical systems subject to
external linear constraints we remit to the monographs
\cite{Bl,Co} and references therein.

More recently, several authors discuss the more general class of
nonholonomic Lagrangian (Hamiltonian) systems subject to linear
constraints on Lie algebroids (see \cite{CoLeMaMa,CoMa,Me,MeLa}).
In the same Lie algebroid setting, other authors \cite{IMMS2}
consider variational constrained mechanical systems. In another
direction, a unified approach of nonholonomic and vakonomic
mechanics, using general algebroids instead of just Lie
algebroids, was developed in \cite{GrGr}.

As a consequence of all these investigations, one deduces that
there are several reasons for discussing unconstrained
(constrained) Mechanics on Lie algebroids:

i) The inclusive nature of the Lie algebroid framework. In fact,
under the same umbrella, one can consider standard unconstrained
(constrained) mechanical systems, (nonholonomic and vakonomic)
Lagrangian systems on Lie algebras, unconstrained (constrained)
systems evolving on semidirect products or (nonholonomic and
vakonomic) Lagrangian systems with symmetries.

ii) The reduction of a (nonholonomic or vakonomic) mechanical
system on a Lie algebroid is a (nonholonomic or vakonomic)
mechanical system on a Lie algebroid. However, the reduction of an
standard unconstrained (constrained) system on the tangent
(cotangent) bundle of the configuration manifold is not, in
general, an standard unconstrained (constrained) system.

iii) The theory of Lie algebroids gives a natural interpretation
of the use of quasi-coordinates (velocities) in Mechanics
(particularly, in nonholonomic and vakonomic mechanics).

On the other hand, in \cite{GGrU,MMeS} an affine version of the
notion of a Lie algebroid structure was introduced. The resultant
geometric object is called a Lie affgebroid structure. A Lie
affgebroid structure on an affine bundle ${\mathcal A}$ is
equivalent to a Lie algebroid structure on the bidual bundle to
${\mathcal A}$ such that the section of the affine dual to
${\mathcal A}$ induced by the constant map on ${\mathcal A}$ equal
to $1$ is a $1$-cocycle.

Lie affgebroid structures may be used to develop a time-dependent
version of unconstrained Lagrangian and Hamiltonian equations on
Lie algebroids (see \cite{GGU2,IMPS,M,MMeS,SaMeMa}). In addition,
in \cite{IMMS} the authors present a geometric description of
Lagrangian and Hamiltonian systems on Lie affgebroids subject to
affine nonholonomic constraints. If we apply this general theory
to the particular case when the Lie affgebroid is the $1$-jet
bundle of local sections of a fibration $\tau: Q \to \R$ then one
recovers some results obtained in \cite{CLMM,LeMaMa1,LeMaMa2} for
standard time-dependent nonholonomic Lagrangian systems subject to
affine constraints. The same reasons for discussing unconstrained
(constrained) mechanics on Lie algebroids are also valid for
discussing unconstrained (constrained) mechanics on Lie
affgebroids.

On the other hand, in \cite{ViB} the authors discuss standard
time-dependent vakonomic dynamics and its relation with
presymplectic geometry. More recently, in \cite{BLEE} a geometric
approach to time-dependent optimal control problems is proposed.
This formulation is based on the Skinner and Rusk formalism for
Lagrangian and Hamiltonian systems. Some applications are also
presented. The aim of this paper is to extend these formulations
to the Lie affgebroid setting or, in other words, to discuss
vakonomic mechanics on Lie affgebroids.

The paper is organized as follows. In Section 2, we recall some
well-known facts about the geometry of Lie affgebroids and about
the unconstrained Hamiltonian formalism on Lie affgebroids. In
Section 3, we obtain the vakonomic equations and the vakonomic
bracket for a constrained mechanical system on a Lie affgebroid
${\mathcal A}$. We devote special attention to the particular case
when the constraints are given by an affine subbundle of
${\mathcal A}$. We also discuss, in this section, the variational
character of the theory. In section 4, we apply the results
obtained in the paper to several examples. In fact, we develop a
Skinner-Rusk formalism on Lie affgebroids. We also consider
vakonomic Mechanics on a Lie affgebroid ${\mathcal A}$, for the
particular case when ${\mathcal A}$ is the $1$-jet bundle of local
sections of a fibration over $\mathbb{R}$. As a consequence, we
recover some previous results in the literature. We also discuss
optimal control systems as vakonomic systems on Lie affgebroids.
The paper ends with our conclusions and a description of future
research directions.

\newpage

\section{Hamiltonian formalism on Lie affgebroids} \label{sec2}

\subsection{Lie affgebroids}\label{secaff}

Let $\tau_{\mathcal A}:{\mathcal A}\rightarrow Q$ be an affine
bundle with associated vector bundle $\tau_V:V\rightarrow Q$.
Denote by $\tau_{{\mathcal A}^+}:{\mathcal A}^+={{\mathcal
A}\hspace{-0.05cm}f\hspace{-0.075cm}f}({\mathcal A},\R)\rightarrow
Q$ the dual bundle whose fibre over $x\in Q$ consists of affine
functions on the fibre ${\mathcal A}_x$. Note that this bundle has
a distinguished section $1_{\mathcal A}\in\Gamma(\tau_{{\mathcal
A}^+})$ corresponding to the constant function $1$ on ${\mathcal
A}$. We also consider the bidual bundle
$\tau_{\widetilde{{\mathcal A}}}:\widetilde{{\mathcal
A}}\rightarrow Q$ whose fibre at $x\in Q$ is the vector space
$\widetilde{{\mathcal A}}_x=({\mathcal A}_x^+)^*$. Then,
${\mathcal A}$ may be identified with an affine subbundle of
$\widetilde{{\mathcal A}}$ via the inclusion $i_{\mathcal
A}:{\mathcal A}\rightarrow\widetilde{{\mathcal A}}$ given by
$i_{\mathcal A}(\a)(\varphi)=\varphi(\a)$, which is an injective
affine map whose associated linear map is denoted by
$i_V:V\rightarrow\widetilde{{\mathcal A}}$. Thus, $V$ may be
identified with a vector subbundle of $\widetilde{{\mathcal A}}$.

A {\it
 Lie affgebroid structure} on ${\mathcal A}$ consists of a Lie algebra structure
 $\lcf\cdot,\cdot\rcf_V$ on the space
 $\Gamma(\tau_V)$ of the sections of $\tau_V:V\rightarrow Q$, a $\R$-linear action
 $D:\Gamma(\tau_{\mathcal A})\times\Gamma(\tau_V)\rightarrow\Gamma(\tau_V)$ of
 the sections of ${\mathcal A}$ on $\Gamma(\tau_V)$ and an affine map
 $\rho_{\mathcal A}:{\mathcal A}\rightarrow TQ$, the {\it anchor map}, satisfying the following
 conditions:
  \begin{enumerate}
\item[$ \bullet$] $D_X\lcf\bar{Y},\bar{Z}\rcf_V=\lcf
D_X\bar{Y},\bar{Z}\rcf_V+\lcf\bar{Y},D_X\bar{Z}\rcf_V,$
\item[$\bullet$]
$D_{X+\bar{Y}}\bar{Z}=D_X\bar{Z}+\lcf\bar{Y},\bar{Z}\rcf_V,$
\item[$\bullet$] $D_X(f\bar{Y})=fD_X\bar{Y}+\rho_{\mathcal
A}(X)(f)\bar{Y},$
\end{enumerate}
\noindent for $X\in\Gamma(\tau_{\mathcal A})$,
$\bar{Y},\bar{Z}\in\Gamma(\tau_V)$ and $f\in C^{\infty}(Q)$ (see
\cite{GGrU,MMeS}).

If $(\lcf\cdot,\cdot\rcf_V,D,\rho_{\mathcal A})$ is a Lie
affgebroid structure on an affine bundle ${\mathcal A}$ then $(V,$
$\lcf\cdot,\cdot\rcf_V,\rho_V)$ is a Lie algebroid, where
$\rho_V:V\rightarrow TQ$ is the vector bundle map associated with
the affine morphism $\rho_{\mathcal A}:{\mathcal A}\rightarrow TQ$
(for the definition and properties of Lie algebroids we remit to
\cite{Ma}).

A Lie affgebroid structure on an affine bundle $\tau_{\mathcal
A}:{\mathcal A}\rightarrow Q$ induces a Lie algebroid structure
$(\lcf\cdot,\cdot\rcf_{\widetilde{{\mathcal
A}}},\rho_{\widetilde{{\mathcal A}}})$ on the bidual bundle
$\widetilde{{\mathcal A}}$ such that $1_{\mathcal
A}\in\Gamma(\tau_{{\mathcal A}^+})$ is a $1$-cocycle in the
corresponding Lie algebroid cohomology, that is,
$d^{\widetilde{{\mathcal A}}}1_{\mathcal A}=0$. Here,
$d^{\widetilde{{\mathcal A}}}$ is the differential of the Lie
algebroid $(\widetilde{\mathcal A}, \lcf \cdot , \cdot
\rcf_{\widetilde{\mathcal A}}, \rho_{\widetilde{{\mathcal A}}})$.

Conversely, let $(U,\lcf\cdot,\cdot\rcf_U,\rho_U)$ be a Lie
algebroid over $Q$ and $\phi:U\rightarrow\R$ be a $1$-cocycle of
$(U,\lcf\cdot,\cdot\rcf_U,\rho_U)$ such that $\phi_{|U_x}\neq 0$,
for all $x\in Q$. Then, ${\mathcal A}=\phi^{-1}\{1\}$  is an
affine bundle over $Q$ which admits a Lie affgebroid structure in
such a way that $(U,\lcf\cdot,\cdot\rcf_U,\rho_U)$ may be
identified with the bidual Lie algebroid $(\widetilde{{\mathcal
A}},\lcf\cdot,\cdot\rcf_{\widetilde{{\mathcal
A}}},\rho_{\widetilde{{\mathcal A}}})$ to ${\mathcal A}$ and,
under this identification, the $1$-cocycle $1_{\mathcal
A}:\widetilde{{\mathcal A}}\rightarrow\R$ is just $\phi$. The
affine bundle $\tau_{\mathcal A}:{\mathcal A}\rightarrow Q$ is
modelled on the vector bundle $\tau_V:V=\phi^{-1}\{0\}\rightarrow
Q$.

Let $\tau_{\mathcal A}:{\mathcal A}\to Q$ be a Lie affgebroid
modelled on the Lie algebroid $\tau_V:V\to Q$. Suppose that
$(x^i)$ are local coordinates on an open subset $U$ of $Q$ and
that $\{e_0,e_{\alpha}\}$ is a local basis of
$\Gamma(\tau_{\widetilde{{\mathcal A}}})$ in $U$ which is adapted
to the $1$-cocycle $1_{\mathcal A}$, i.e., such that $1_{\mathcal
A}(e_0)=1$ and $1_{\mathcal A}(e_{\alpha})=0,$ for all $\alpha.$
Note that if $\{e^0,e^{\alpha}\}$ is the dual basis of
$\{e_0,e_{\alpha}\}$ then $e^0=1_{\mathcal A}$. Moreover, since
$1_\A$ is a 1-cocycle, we have that
$$\begin{array}{rclrclccrclrcl} \lcf
e_0,e_{\alpha}\rcf_{\widetilde{{\mathcal
A}}}=C_{0\alpha}^{\gamma}e_{\gamma},\;\;\lcf
e_{\alpha},e_{\beta}\rcf_{\widetilde{{\mathcal
A}}}=C_{\alpha\beta}^{\gamma}e_{\gamma},\;\;
\rho_{\widetilde{{\mathcal
A}}}(e_0)=\rho_0^i\displaystyle\frac{\partial}{\partial x^i},\;\;
\rho_{\widetilde{{\mathcal
A}}}(e_{\alpha})=\rho_{\alpha}^i\displaystyle\frac{\partial}{\partial
x^i}.
\end{array}
$$
Denote by $(x^i,y^0,y^{\alpha})$ the corresponding local
coordinates on $\widetilde{{\mathcal A}}$. Then, the local
equation defining the affine subbundle ${\mathcal A}$
(respectively, the vector subbundle $V$) of $\widetilde{{\mathcal
A}}$ is $y^0=1$ (respectively, $y^0=0$). Thus, $(x^i,y^{\alpha})$
may be considered as local coordinates on ${\mathcal A}$ and $V$.

The standard example of a Lie affgebroid is the 1-jet bundle
$\tau_{1,0}:J^1\tau\to Q$ of local sections of a fibration
$\tau:Q\to\R$. It is well known that $\tau_{1,0}$ is an affine
bundle modelled on the vector bundle
$\pi=(\pi_Q)_{|V\tau}:V\tau\to Q$, where $V\tau$ is the vertical
bundle of $\tau$. Moreover, if $t$ is the usual coordinate on $\R$
and $\eta$ is the closed $1$-form on $Q$ given by
$\eta=\tau^*(dt)$ then we have the identification
$J^1\tau\cong\{v\in TQ\,|\,\eta(v)=1\}$ (see, for instance,
\cite{Sa}). Note that $V\tau=\{v\in TQ\,|\,\eta(v)=0\}.$ Thus, the
bidual bundle $\widetilde{J^1\tau}$ to $\tau_{1,0}:J^1\tau\to Q$
may be identified with the tangent bundle $TQ$ to $Q$ and, under
this identification, the Lie algebroid structure on $\pi_Q:TQ\to
Q$ is the standard Lie algebroid structure and the $1$-cocycle
$1_{J^1\tau}$ on $\pi_Q:TQ\to Q$ is just $\eta$.

\subsection{The Hamiltonian formalism}\label{sec2.5}


Suppose that $(\tau_{\mathcal A}:{\mathcal A}\rightarrow Q,$
$\tau_V:V\rightarrow Q)$ is a Lie affgebroid. Then, we consider
the prolongation ${\mathcal T}^{\widetilde{{\mathcal A}}}V^*$ of
the bidual Lie algebroid $(\widetilde{{\mathcal
A}},\lcf\cdot,\cdot\rcf_{\widetilde{{\mathcal A}}},$ $
\rho_{\widetilde{{\mathcal A}}})$ over the fibration
$\tau_{V}^*:V^*\rightarrow Q$ and denote by $(\lcf \cdot , \cdot
\rcf_{\widetilde{{\mathcal A}}}^{\tau_{V}^*},
\rho_{\widetilde{{\mathcal A}}}^{\tau_{V}^*})$ the Lie algebroid
structure on ${\mathcal T}^{\widetilde{{\mathcal A}}}V^*$ (for the
definition of the Lie algebroid structure on the prolongation of a
Lie algebroid over a fibration, we remit to \cite{HM,LMM}).

Let $\mu:{\mathcal A}^+\rightarrow V^*$ be the canonical
projection given by $\mu(\varphi)=\varphi^l$, for $\varphi\in
{\mathcal A}^+_x$, with $x\in Q$, where $\varphi^l\in V^*_x$ is
the linear map associated with the affine map $\varphi$ and
$h:V^*\rightarrow {\mathcal A}^+$ be a {\it Hamiltonian section}
of $\mu$, that is, $\mu \comp h = Id$.

Now, we consider the prolongation ${\mathcal
T}^{\widetilde{{\mathcal A}}}{\mathcal A}^+$ of the Lie algebroid
$\widetilde{{\mathcal A}}$ over $\tau_{{\mathcal A}^+}:{\mathcal
A}^+\to Q$ with vector bundle projection $\tau^{\tau_{{\mathcal
A}^+}}_{\widetilde{{\mathcal A}}}:{\mathcal
T}^{\widetilde{{\mathcal A}}}{\mathcal A}^+\rightarrow {\mathcal
A}^+$. Then, we may introduce the map ${\mathcal T}h:{\mathcal
T}^{\widetilde{{\mathcal A}}}V^*\rightarrow{\mathcal
T}^{\widetilde{{\mathcal A}}}{\mathcal A}^+$ defined by ${\mathcal
T}h(\tilde{\a},X_{\alpha})=(\tilde{\a},(T_{\alpha}h)(X_{\alpha})),$
for $(\tilde{\a},X_{\alpha})\in {\mathcal
T}_\alpha^{\widetilde{{\mathcal A}}}V^*$, with $\alpha\in V^*.$ It
is easy to prove that the pair $({\mathcal  T}h,h)$ is a Lie
algebroid morphism between the Lie algebroids
$\tau_{\widetilde{{\mathcal A}}}^{\tau_V^*}:{\mathcal
T}^{\widetilde{{\mathcal A}}}V^*\rightarrow V^*$ and
$\tau_{\widetilde{{\mathcal A}}}^{\tau_{{\mathcal A}^+}}:{\mathcal
T}^{\widetilde{{\mathcal A}}}{\mathcal A}^+\rightarrow {\mathcal
A}^+$. We denote by $\lambda_h$ and $\Omega_h$ the sections of the
vector bundles $({\mathcal T}^{\widetilde{{\mathcal
A}}}V^*)^*\rightarrow V^*$ and $\Lambda^2({\mathcal
T}^{\widetilde{{\mathcal A}}}V^*)^*\rightarrow V^*$ defined by
\begin{equation}\label{Omegah}
\lambda_h=({\mathcal T}h,h)^*(\lambda_{\widetilde{{\mathcal
A}}}),\;\;\Omega_h=({\mathcal
T}h,h)^*(\Omega_{\widetilde{{\mathcal A}}}),
\end{equation}
$\lambda_{\widetilde{{\mathcal A}}}$ and
$\Omega_{\widetilde{{\mathcal A}}}$ being the Liouville section
and the canonical symplectic section, respectively, associated
with the Lie algebroid $\widetilde{{\mathcal A}}$ (see
\cite{LMM}). Note that $\Omega_h=-d^{{\mathcal
T}^{\widetilde{{\mathcal A}}}V^*}\lambda_h.$

On the other hand, let $\eta:{\mathcal T}^{\widetilde{{\mathcal
A}}}V^*\rightarrow\R$ be the section of $({\mathcal
T}^{\widetilde{{\mathcal A}}}V^*)^*\rightarrow V^*$ given by
\begin{equation}\label{eta}
\eta(\tilde{\a},X_{\nu})=1_{\mathcal A}(\tilde{\a}),\,\mbox{ for
}\,(\tilde{\a},X_{\nu})\in {\mathcal T}_\nu^{\widetilde{{\mathcal
A}}}V^*,\,\mbox{ with }\,\nu\in V^*.
\end{equation}
We remark that if $\pr_1:{\mathcal T}^{\widetilde{{\mathcal
A}}}V^*\to \widetilde{{\mathcal A}}$ is the canonical projection
on the first factor then $(\pr_1,\tau_V^*)$ is a morphism between
the Lie algebroids $\tau_{\widetilde{{\mathcal
A}}}^{\tau_V^*}:{\mathcal T}^{\widetilde{{\mathcal
A}}}V^*\rightarrow V^*$ and $\tau_{\widetilde{{\mathcal
A}}}:\widetilde{{\mathcal A}}\to Q$ and
$(\pr_1,\tau_V^*)^*(1_{\mathcal A})=\eta$. Thus, since
$1_{\mathcal A}$ is a $1$-cocycle of $\tau_{\widetilde{{\mathcal
A}}}:\widetilde{{\mathcal A}}\rightarrow Q$, we deduce that $\eta$
is a $1$-cocycle of the Lie algebroid $\tau_{\widetilde{{\mathcal
A}}}^{\tau_V^*}:{\mathcal T}^{\widetilde{{\mathcal
A}}}V^*\rightarrow V^*.$

Let $(x^i)$ be local coordinates on an open subset $U$ of $Q$ and
$\{e_0,e_{\alpha}\}$ be a local basis of
$\Gamma(\tau_{\widetilde{{\mathcal A}}})$ on $U$ adapted to
$1_{\mathcal A}$. Denote by $(x^i,y^0,y^{\alpha})$ the induced
local coordinates on $\widetilde{{\mathcal A}}$ and by
$(x^i,y_0,y_{\alpha})$ the dual coordinates on ${\mathcal A}^+$.
Then, $(x^i,y_{\alpha})$ are local coordinates on $V^*$ and
$\{\Y_0,\Y_{\alpha},\U_{\alpha}\}$ is a local basis of
$\Gamma(\tau^{\tau_{V}^*}_{\widetilde{{\mathcal A}}})$, where
$$\Y_0(\psi)=\Big(e_0(x),\rho_0^i\displaystyle\frac{\partial}{\partial
x^i}_{|\psi}\Big),\Y_{\alpha}(\psi)=\Big(e_{\alpha}(x),\rho_{\alpha}^i\displaystyle\frac{\partial}{\partial
x^i}_{|\psi}\Big),\U_{\alpha}(\psi)=\Big(0,\displaystyle\frac{\partial}{\partial
y_{\alpha}}_{|\psi}\Big),$$ for $\psi\in V^*_x$. Suppose that
$h(x^i,y_{\alpha})=(x^i,-H(x^j,y_{\beta}),y_{\alpha})$ and that
$\{\Y^0,\Y^{\alpha},\U^{\alpha}\}$ is the dual basis of
$\{\Y_0,\Y_{\alpha},$ $\U_{\alpha}\}$. Then $\eta=\Y^0$ and, from
(\ref{Omegah}) and the definition of the map ${\mathcal T}h$, it
follows that
$$\Omega_h=\Y^{\alpha}\wedge\U^{\alpha}+\frac{1}{2}C_{\alpha\beta}^{\gamma}y_{\gamma}\Y^{\alpha}\wedge\Y^{\beta}+
\Big(\rho_{\alpha}^i\frac{\partial H}{\partial
x^i}-C_{0\alpha}^{\gamma}y_{\gamma}\Big)\Y^{\alpha}\wedge\Y^0+\frac{\partial
H }{\partial y_{\alpha}}\U^{\alpha}\wedge\Y^0.$$

Thus, it is easy to prove that the pair $(\Omega_h,\eta)$ is a
cosymplectic structure on the Lie algebroid
$\tau_{\widetilde{{\mathcal A}}}^{\tau_V^*}:{\mathcal
T}^{\widetilde{{\mathcal A}}}V^*\rightarrow V^*$ (this means that
$d^{{\mathcal T}^{\widetilde{\mathcal A}}V^*}\Omega_{h} = 0$,
$d^{{\mathcal T}^{\widetilde{\mathcal A}}V^*}\eta = 0$ and
$(\eta\wedge\Omega_{h}\wedge \dots^n\dots \wedge \Omega_{h})(\psi)
\neq 0$, $\forall \psi \in V^*$, where $n$ is the rank of
${\mathcal A}$). If $R_h\in\Gamma(\tau_{\widetilde{{\mathcal
A}}}^{\tau_V^*})$ is the Reeb section of $(\Omega_h,\eta)$ (that
is, $i_{{\mathcal R}_{h}}\Omega_{h} = 0$ and $i_{{\mathcal
R}_{h}}\eta = 1$), then its integral curves (i.e., the integral
curves of $\rho_{\widetilde{{\mathcal A}}}^{\tau_{V}^*}(R_h)$) are
just {\it the solutions of the Hamilton equations for $h$},
$$\frac{dx^i}{dt}=\rho_0^i+\rho_{\alpha}^i\frac{\partial
H}{\partial
y_{\alpha}},\;\;\;\frac{dy_{\alpha}}{dt}=-\rho_{\alpha}^i\frac{\partial
H }{\partial
x^i}+y_{\gamma}\Big(C_{0\alpha}^{\gamma}+C_{\beta\alpha}^{\gamma}\frac{\partial
H }{\partial y_{\beta}}\Big),$$
for $i\in\{1,\dots,m\}$ and $\alpha\in\{1,\dots,n\}$.


Next, we will present an alternative approach in order to obtain
the Hamilton equations. For this purpose, we will use the notion
of an aff-Poisson structure on an AV-bundle which was introduced
in \cite{GGrU} (see also \cite{GGU2}).

Let $\tau_Z:Z\to Q$ be an affine bundle of rank $1$ modelled on
the trivial vector bundle $\tau_{Q\times\R}:Q\times\R\to Q$, that
is, $\tau_Z:Z\to Q$ is an {\em AV-bundle } in the terminology of
\cite{GGU2}. Then, we have an action of $\R$ on the fibres of $Z$.
This action induces a vector field $X_Z$ on $Z$ which is vertical
with respect to the projection $\tau_Z:Z\to Q$.

On the other hand, there exists a one-to-one correspondence
between the space of sections of $\tau_Z:Z\to Q$,
$\Gamma(\tau_Z)$, and the set $\{F_h\in
C^{\infty}(Z)\,|\,X_Z(F_h)=-1\}.$ In fact, if $h\in\Gamma(\tau_Z)$
and $(x^i,s)$ are local fibred coordinates on $Z$ such that
$X_Z=\displaystyle\frac{\partial}{\partial s}$ and $h$ is locally
defined by $h(x^i)=(x^i,-H(x^i))$, then the function $F_h$ on $Z$
is locally given by $F_h(x^i,s)=-H(x^i)-s,$ (for more details, see
\cite{GGU2}).

Now, an {\em aff-Poisson structure } on the AV-bundle $\tau_Z:Z\to
Q$ is a bi-affine map,
$\{\cdot,\cdot\}:\Gamma(\tau_Z)\times\Gamma(\tau_Z)\to
C^{\infty}(Q)$, which satisfies the following properties:
\begin{enumerate}
\item[i)] Skew-symmetric: $\{h_1,h_2\}=-\{h_2,h_1\}$.
\item[ii)] Jacobi identity: $\{h_1,\{h_2,h_3\}\}_V+\{h_2,\{h_3,h_1\}\}_V+\{h_3,\{h_1,h_2\}\}_V=0,$
where $\{\cdot,\cdot\}_V$ is the affine-linear part of the
bi-affine bracket.
\item[iii)] If $h\in\Gamma(\tau_Z)$ then the map $\{h,\cdot\}:\Gamma(\tau_Z)\to
C^{\infty}(Q)$ defined by $\{h, \cdot \}(h') = \{h, h'\}$, for $h'
\in \Gamma(\tau_{Z})$, is an affine derivation.
\end{enumerate}
Condition iii) implies that, for each $h\in\Gamma(\tau_Z)$ the
linear part $\{h,\cdot\}_V:C^{\infty}(Q)\to C^{\infty}(Q)$ of the
affine map $\{h,\cdot\}:\Gamma(\tau_Z)\to C^{\infty}(Q)$ defines a
vector field on $Q$, which is called {\it the Hamiltonian vector
field of $h$} (see \cite{GGU2}).

In \cite{GGU2}, the authors proved that there is a one-to-one
correspondence between aff-Poisson bra\-ckets $\{\cdot,\cdot\}$ on
$\tau_Z:Z\to Q$ and Poisson brackets $\{\cdot,\cdot\}_{\Pi}$ on
$Z$ which are $X_Z$-invariant, i.e., which are associated with
Poisson 2-vectors $\Pi$ on $Z$ such that ${\mathcal
L}_{X_Z}\Pi=0$. This correspondence is determined by
$$\{h_1,h_2\}\comp\tau_Z=\{F_{h_1},F_{h_2}\}_\Pi,\makebox[1cm]{for}h_1,h_2\in\Gamma(\tau_Z).$$

Using this correspondence one may prove the following result.

\begin{theorem}\label{teor2.2}\cite{IMPS} Let
$\tau_{\mathcal A}:{\mathcal A}\to Q$ be a Lie affgebroid modelled
on the vector bundle $\tau_V:V\to Q$. Denote by $\tau_{{\mathcal
A}^+}:{\mathcal A}^+\to Q$ (resp., $\tau_V^*:V^*\to Q$) the dual
vector bundle to ${\mathcal A}$ (resp., to $V$) and by
$\mu:{\mathcal A}^+\to V^*$ the canonical projection. Then:
\begin{enumerate}
\item[i)] $\mu:{\mathcal A}^+\to V^*$ is an AV-bundle which admits
an aff-Poisson structure. \item[ii)] If $h:V^*\to {\mathcal A}^+$
is a Hamiltonian section then the Hamiltonian vector field of $h$
with respect to the aff-Poisson structure is a vector field on
$V^*$ whose integral curves are just the solutions of the Hamilton
equations for $h$.
\end{enumerate}
\end{theorem}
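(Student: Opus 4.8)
The plan is to trivialize the question by passing to the canonical linear (Lie--Poisson) structure on the dual of the bidual Lie algebroid and then to invoke the correspondence of \cite{GGU2} between $X_Z$-invariant Poisson structures and aff-Poisson brackets. Since $\widetilde{{\mathcal A}}_x=({\mathcal A}^+_x)^*$ there is a canonical identification ${\mathcal A}^+\cong\widetilde{{\mathcal A}}^*$, so ${\mathcal A}^+$ carries the canonical linear Poisson $2$-vector $\Pi\equiv\Pi_{\widetilde{{\mathcal A}}}$ of the Lie algebroid $(\widetilde{{\mathcal A}},\lcf\cdot,\cdot\rcf_{\widetilde{{\mathcal A}}},\rho_{\widetilde{{\mathcal A}}})$. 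On the other hand, the fibre of $\mu$ over $\psi\in V^*_x$ is the set of affine functions on ${\mathcal A}_x$ whose linear part is $\psi$, an affine line over the constants; hence $\mu:{\mathcal A}^+\to V^*$ is an affine bundle of rank $1$ modelled on the trivial bundle $V^*\times\R\to V^*$, i.e. an AV-bundle, and the vertical vector field $X_{{\mathcal A}^+}$ induced by the fibrewise $\R$-action is translation in the $1_{\mathcal A}$-direction, which in the adapted local coordinates $(x^i,y_0,y_\alpha)$ of Section~\ref{sec2.5} is simply $X_{{\mathcal A}^+}=\partial/\partial y_0$.

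For (i) it then suffices, by the cited result of \cite{GGU2}, to verify that $\Pi$ is $X_{{\mathcal A}^+}$-invariant, ${\mathcal L}_{X_{{\mathcal A}^+}}\Pi=0$; the rule $\{h_1,h_2\}\circ\mu=\{F_{h_1},F_{h_2}\}_\Pi$ then defines an aff-Poisson structure on $\mu:{\mathcal A}^+\to V^*$. This invariance is precisely the $1$-cocycle condition $d^{\widetilde{{\mathcal A}}}1_{\mathcal A}=0$. Indeed, in adapted coordinates the Poisson brackets of the coordinate functions of ${\mathcal A}^+$ are built only from the anchor components $\rho_0^i,\rho_\alpha^i$ and the structure functions $C_{0\alpha}^\gamma,C_{\alpha\beta}^\gamma$ of $\widetilde{{\mathcal A}}$, and the cocycle condition says exactly that the brackets $\lcf e_0,e_\alpha\rcf_{\widetilde{{\mathcal A}}}$, $\lcf e_\alpha,e_\beta\rcf_{\widetilde{{\mathcal A}}}$ have no $e_0$-component; therefore no coefficient of $\Pi$ depends on $y_0$ and ${\mathcal L}_{\partial/\partial y_0}\Pi=0$.

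For (ii), let $h:V^*\to{\mathcal A}^+$ be a Hamiltonian section with associated function $F_h\in C^\infty({\mathcal A}^+)$, so $X_{{\mathcal A}^+}(F_h)=-1$, and let $X_{F_h}$ be the Hamiltonian vector field of $F_h$ with respect to $\Pi$. Because $X_{{\mathcal A}^+}(F_h)$ is constant and $\Pi$ is $X_{{\mathcal A}^+}$-invariant, the standard identity for the Lie derivative of a Hamiltonian vector field gives $[X_{{\mathcal A}^+},X_{F_h}]=0$; hence $X_{F_h}$ is invariant under the flow of $X_{{\mathcal A}^+}$, whose orbits are the connected one-dimensional fibres of $\mu$, so $X_{F_h}$ is $\mu$-projectable onto a vector field $X_h$ on $V^*$. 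Unwinding property (iii) of an aff-Poisson bracket together with the correspondence $\{h_1,h_2\}\circ\mu=\{F_{h_1},F_{h_2}\}_\Pi$ and the rule $F_{h+g}=F_h+g\circ\mu$ for $g\in C^\infty(V^*)$, one checks that the linear part $\{h,\cdot\}_V$ satisfies $\{h,\cdot\}_V(g)\circ\mu=X_{F_h}(g\circ\mu)=(X_h g)\circ\mu$, i.e. $X_h$ is exactly the Hamiltonian vector field of $h$ in the sense of Section~\ref{sec2.5}. Finally, writing $F_h=-H(x^i,y_\beta)-y_0$ and computing $X_{F_h}(x^i)$ and $X_{F_h}(y_\alpha)$ from the explicit linear Poisson brackets on ${\mathcal A}^+$ --- by the cocycle condition these are again independent of $y_0$, reconfirming projectability --- one recognises in them the right-hand sides of the Hamilton equations for $h$ displayed above; hence the integral curves of $X_h$ on $V^*$ are exactly the solutions of those equations, proving (ii).

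The calculations involved (the coordinate expressions for $\Pi$ and for $X_{F_h}$) are routine; the only delicate point is the bookkeeping of conventions --- the identification ${\mathcal A}^+\cong\widetilde{{\mathcal A}}^*$, the direction and sign of the fibrewise $\R$-action (hence of $X_{{\mathcal A}^+}$ and of $F_h$), and the sign of the Lie--Poisson bracket --- so that the $\mu$-projection of $X_{F_h}$ reproduces the abstractly defined aff-Poisson Hamiltonian vector field of $h$ and the stated Hamilton equations rather than their negatives. Alternatively, one can bypass the final computation by recalling from Section~\ref{sec2.5} that the Hamilton equations for $h$ are the integral curves of $\rho_{\widetilde{{\mathcal A}}}^{\tau_V^*}(R_h)$, where $R_h$ is the Reeb section of the cosymplectic structure $(\Omega_h,\eta)$ on ${\mathcal T}^{\widetilde{{\mathcal A}}}V^*$, and identifying $X_h$ with this vector field.
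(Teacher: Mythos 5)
Your proposal is correct and follows essentially the same route the paper intends: it invokes the one-to-one correspondence of \cite{GGU2} between aff-Poisson brackets on an AV-bundle and $X_Z$-invariant Poisson structures on its total space, applied to the linear Poisson $2$-vector on ${\mathcal A}^+\cong\widetilde{{\mathcal A}}^*$, with invariance under $\partial/\partial y_0$ following from the $1$-cocycle condition $d^{\widetilde{{\mathcal A}}}1_{\mathcal A}=0$ (this is exactly the pattern the authors spell out later in the proof of Theorem \ref{th6.21} for the AV-bundle $\mu_1:W_1\to W_1'$, where the analogous $2$-vector $\Pi_{W_1}$ visibly has no $y_0$-dependence). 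Your coordinate computation of $X_{F_h}$ also reproduces the displayed Hamilton equations with signs consistent with the paper's conventions, so nothing is missing.
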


\section{Vakonomic mechanics on Lie affgebroids}

\subsection{Vakonomic equations and vakonomic bracket}\label{sec4.1}

Let $\tau_{\A}:\A\rightarrow Q$ be a Lie affgebroid of rank $n$
over a manifold $Q$ of dimension $m$. We consider an embedded
submanifold $\M\subseteq\A$, called {\it the constraint
submanifold}, of dimension $n+m-\bar{m}$ such that
$\tau_\M=\tau_{\A|\M}:\M\to Q$ is a surjective submersion.

Now, suppose that $e$ is a point of $\M$, with $\tau_\M(e)=x$,
that $(x^i)$ are local coordinates on an open subset $U$ of $Q$,
$x\in U$, and that $\{e_0,e_\alpha\}$ is a local basis of
$\Gamma(\tau_\bidual)$ on $U$ adapted to the $1$-cocycle $1_\A$.
Denote by $(x^i,y^0,y^\alpha)$ (respectively, $(x^i,y^\alpha)$)
the corresponding local coordinates for $\bidual$ (respectively,
$\A$) on the open subset $\tau^{-1}_\bidual(U)$ (respectively,
$\tau^{-1}_\A(U)$). Assume that
$$\M\cap\tau^{-1}_\A(U)\equiv\{(x^i,y^\alpha)\in\tau^{-1}_\A(U) \,|\,
\Phi^A(x^i,y^\alpha)=0,\;A=1,\dots,\bar{m}\}.$$ The rank of the
$(\bar{m}\times(n+m))$-matrix
$\Big(\displaystyle\frac{\partial\Phi^A}{\partial
x^i},\frac{\partial\Phi^A}{\partial y^\alpha}\Big)$ is maximun,
that is, $\bar{m}$. Then, using that $\tau_\M:\M\to Q$ is a
submersion, we can suppose that the
$(\bar{m}\times\bar{m})$-matrix
$$\Big(\displaystyle{\frac{\partial\Phi^A}{\partial
y^B}}_{|e}\Big)_{A=1,\dots,\bar{m};B=1,\dots,\bar{m}}$$ is
regular. Then, we will use the following notation
$(y^\alpha)=(y^A,y^a),$ for $1\leq \alpha\leq n$, $1\leq
A\leq\bar{m}$ and $\bar{m}+1\leq a\leq n$.

Now, using the implicit function theorem, we obtain that there
exist an open subset $\widetilde{V}$ of $\tau^{-1}_\A(U)$, an open
subset $W\subseteq\R^{m+n-\bar{m}}$ and smooth real functions
$\Psi^A:W\to\R,\;\;A=1,\dots,\bar{m},$ such that
$$\M\cap\widetilde{V} \equiv \{(x^i,y^\alpha)\in\widetilde{V} \,|\,
y^A=\Psi^A(x^i,y^a),\;A=1,\dots,\bar{m}\}.$$ Consequently,
$(x^i,y^a)$ are local coordinates on $\M$.

Next, consider the Whitney sum of $\A^+$ and $\A$, that is,
$\A^+\oplus_Q\A$ and the canonical projections
$\pr_1:\A^+\oplus_Q\A\to\A^+$ and $\pr_2:\A^+\oplus_Q\A\to\A$. Let
$W_0$ be the submanifold of $\A^+\oplus_Q\A$ given by
$W_0=\pr_2^{-1}(\M)=\A^+\oplus_Q\M$ and the restrictions
$\pi_1={\pr_1}_{|W_0}$ and $\pi_2={\pr_2}_{|W_0}$. Also denote by
$\nu: W_0\to Q$ the canonical projection.

%

\vspace{.2cm} Now, we take the prolongation
$\tau_\bidual^{\tau_{\A^+}}:{\mathcal T}^\bidual \A^+\to \A^+$
(respectively, $\tau_\bidual^\nu:{\mathcal T}^\bidual W_0\to W_0$)
of the Lie algebroid $\bidual$ over $\tau_\dual:\dual\to Q$
(res\-pectively, $\nu :W_0\to Q$). Moreover, we can prolong $\pi
_1:W_0\to\dual$ to a morphism of Lie algebroids ${\mathcal T}\pi
_1 :{\mathcal T}^\bidual W_0\to {\mathcal T}^\bidual\dual$ defined
by $\T\pi_1=(Id,T\pi_1)$.

If $(x^i,y_0,y_\alpha)$ are the local coordinates on $\dual$
induced by the dual basis $\{e^0,e^\alpha\}$ of the local basis
$\{e_0,e_\alpha\}$ of $\Gamma(\tau_\bidual)$, then
$(x^i,y_0,y_\alpha,y^a)$ are local coordinates for $W_0$ and we
may consider the local basis $\{\Y_0,{\mathcal Y}_\alpha
,\P^0,{\mathcal P}^\alpha ,\mathcal{V}_a\}$ of
$\Gamma(\tau_\bidual^\nu)$ defined by
$$\begin{array}{c} {\mathcal Y}_0(\varphi, \a)=\Big(e_0(x),
\rho_0^i\displaystyle\frac{\partial }{\partial x^i}_{|\varphi},
0\Big),\;\;\; {\mathcal Y}_\alpha(\varphi, \a)=\Big(e_\alpha(x),
\rho_\alpha^i\displaystyle\frac{\partial }{\partial
x^i}_{|\varphi}, 0\Big),\\[12pt]
{\mathcal P}^0(\varphi, \a)=\Big(0,\displaystyle\frac{\partial
}{\partial y_{0}}_{|\varphi}, 0\Big),\; {\mathcal
P}^\alpha(\varphi, \a)=\Big(0,\displaystyle\frac{\partial
}{\partial y_{\alpha}}_{|\varphi}, 0\Big),\; {\mathcal
V}_a(\varphi, \a)=\Big(0, 0,\displaystyle\frac{\partial }{\partial
y^{a}}_{|\a}\Big),
\end{array}$$
for $(\varphi, \a)\in W_0$ and $\nu(\varphi,\a)=x$, where
$\rho_0^i$ and $\rho_\alpha^i$ are the components of the anchor
map $\rho_\bidual$ with respect to the local basis
$\{e_0,e_\alpha\}$.

Now, one may consider on the Lie algebroid
$\tau_\bidual^\nu:\T^\bidual W_0\to W_0$ the presymplectic
2-section $\Omega =({\mathcal T}\pi _1, \pi_1)^*\Omega _\bidual,$
where $\Omega _\bidual$ is the canonical symplectic section on
${\mathcal T}^\bidual\dual$. The local expression of $\Omega$ is
\begin{equation}\label{locOmega}
\Omega=\Y^0\wedge\P_0+\Y^\alpha\wedge {\mathcal P}_\alpha
+C_{0\alpha}^\gamma y_\gamma\Y^0\wedge\Y^\alpha+ \frac{1}{2}
{C}_{\alpha\beta}^\gamma y_\gamma \Y^\alpha\wedge \Y^\beta,
\end{equation}
$\{\Y^0,{\mathcal Y}^\alpha ,\P_0,{\mathcal P}_\alpha
,\mathcal{V}^a\}$ being the dual basis of the local basis
$\{\Y_0,{\mathcal Y}_\alpha ,\P^0,{\mathcal P}^\alpha
,\mathcal{V}_a\}$.

On the other hand, if $\pr_1:\T^\bidual W_0\to\bidual$ is the
canonical projection on the first factor, then we can introduce
the section $\eta\in\Gamma((\tau_\bidual^\nu)^*)$ defined by
$\eta=(\pr_1,\nu)^*1_\A.$ Since $1_\A$ is a 1-cocycle of
$\bidual\to Q$, we deduce that $\eta$ is a 1-cocycle of
$\T^\bidual W_0\to W_0$. Moreover, it is easy to prove that
\begin{equation}\label{etaloc} \eta=\Y^0.
\end{equation}

Now, let $L:\A\to\R$ be a Lagrangian function on $\A$ and denote
by $\tilde{L}$ the restriction of $L$ to the constraint
submanifold $\M$.

{\it The Pontryagin Hamiltonian} $H_{W_0}$ is the real function in
$W_0=\dual\oplus _Q \M$ given by $H_{W_0} (\varphi , \a)= \varphi
(\a) -\tilde{L}(\a),$ or, in local coordinates,
\begin{equation}\label{HW0locaf}
H_{W_0}(x^i,y_0, y_\alpha, y^a)=y_0+y_ay^a+y_{A}\Psi^{A}(x^i,
y^a)-\tilde{L}(x^i, y^a)\, .
\end{equation}
Thus, one can consider the presymplectic 2-section $\Omega_{W_0}$
on $\T^\bidual W_0$ defined by
$$\Omega_{W_0}=\Omega+d^{\T^\bidual W_0}H_{W_0}\wedge\eta.$$
In local coordinates, using (\ref{locOmega}), (\ref{etaloc}) and
(\ref{HW0locaf}), we deduce that
\begin{equation}\label{OmegaHW0loc}
\begin{array}{l}
\Omega_{W_0}=\Y^\alpha\wedge {\mathcal
P}_\alpha+\Big[\Big(y_A\displaystyle\frac{\partial \Psi
^A}{\partial x^i} -\frac{\partial \tilde{L}}{\partial
x^i}\Big)\rho_\alpha^i+C_{\alpha 0}^\gamma
y_\gamma\Big]\Y^\alpha\wedge\Y^0+y^a
\P_a\wedge\Y^0\;\;\;\;\;\\[12pt]
\hfill+\Psi ^A \P_A\wedge\Y^0+ \displaystyle\frac{1}{2}
{C}_{\alpha\beta}^\gamma y_\gamma \Y^\alpha\wedge \Y^\beta +
\Big(y_a+y_A \displaystyle\frac{\partial \Psi ^A}{\partial
y^a}-\frac{\partial \tilde{L}}{\partial y^a}\Big)\V^a\wedge\Y^0.
\end{array}\end{equation}

\begin{definition}{\rm The vakonomic problem} $(L,\M)$ on the Lie affgebroid $\A$
consists of to find the solutions for the equations
\begin{equation}\label{problvakafg}
i_{X}\Omega _{W_0} =0\;\mbox{ and }\;i_X\eta=1,\;\mbox{ with
}\;X\in\Gamma(\tau_\bidual^\nu).
\end{equation}

\end{definition}

First, we will obtain the local expression of the vakonomic
problem. In general, a section $X$ satisfying the equations
(\ref{problvakafg}) cannot be found in all points of $W_0$. Thus,
we consider the points where (\ref{problvakafg}) have sense. We
define
$$ W_1=\{ w\in W_0 \, | \, \exists\,
 Z\in\T^\bidual_w W_0: i_{Z}\Omega_{W_0}(w)=0\,\mbox{ and }\,i_Z\eta(w)=1\}.$$

In local coordinates, we deduce that $W_1$ is characterized by the
equations
$$\varphi _a = y_a+y_A \frac{\partial \Psi ^A}{\partial
y^a}-\frac{\partial \tilde{L}}{\partial y^a}=0, \quad
\bar{m}+1\leq a \leq n .$$

Moreover, a direct computation, using (\ref{etaloc}) and
(\ref{OmegaHW0loc}), proves that the local expression of any
section $X$ satisfying the equations (\ref{problvakafg}) is of the
form
\[\begin{array}{rcl}
X_{(\Upsilon_0,\Upsilon^a)}&=&\Y_0+\Psi ^A {\mathcal Y}_A+
y^a{\mathcal Y}_a + \Upsilon_0\P^0 + \Big [ \rho ^i_\alpha \Big (
\displaystyle\frac{\partial \tilde{L}}{\partial x^i}-y_A
\frac{\partial \Psi ^A}{\partial x^i}
\Big )\\[12pt]
&&-y_\gamma(C_{\alpha 0}^\gamma+\Psi ^A {C}^\gamma_{\alpha A}
+y^a{C}^\gamma_{\alpha a}) \Big ]{\mathcal P}^\alpha + \Upsilon
^a{\mathcal V}_a, \end{array}
\]
with $\Upsilon_0$ and $\Upsilon^a$ arbitrary functions.
Consequently, the vakonomic equations are

\begin{equation}\label{eqvakafg}\left \{
\begin{array}{l}
\displaystyle \dot{x}^i=\rho_0^i+\Psi ^A \rho ^i_A+y^a \rho ^i_a ,\\[4pt]
\displaystyle \dot{y}_{A}=\Big ( \frac{\partial
\tilde{L}}{\partial x^i}-y_B \frac{\partial \Psi ^B}{\partial x^i}
\Big ) \rho ^i_A -y_\gamma(C^{\gamma}_{A0}+\Psi ^B C^\gamma_{AB}+y^aC^\gamma_{Aa}) ,\\[9pt]
\displaystyle \frac{d}{dt}\left( \frac{\partial
\tilde{L}}{\partial y^a}-y_A \frac{\partial \Psi ^A}{\partial y^a}
\right)= \Big ( \frac{\partial \tilde{L}}{\partial x^i}-y_A
\frac{\partial \Psi ^A}{\partial x^i} \Big ) \rho ^i_{a}\\[12pt]
\hspace{4.5cm}-y_\gamma( C^{\gamma}_{a0}+\Psi ^B
C^\gamma_{aB}+y^bC^\gamma_{ab}),
\end{array}
\right .\end{equation} for all $1\leq i\leq m$, $1\leq
A\leq\bar{m}$ and $\bar{m}+1\leq a\leq n$.

\begin{remark}{\rm The motion equations for the vakonomic
mechanics may be also expressed as follows
\begin{equation}\label{eqvakafg2}\left \{
\begin{array}{l}\dot{x}^i=\rho_0^i+y^\alpha \rho ^i_\alpha ,\\[4pt]
\displaystyle \frac{d}{dt}\left( \frac{\partial {L}}{\partial
y^\alpha} \right)-\rho_\alpha^i\frac{\partial L}{\partial
x^i}=-\lambda_A \Big [\frac{d}{dt}\Big( \frac{\partial
\phi^A}{\partial y^\alpha}\Big)-\rho_\alpha^i \frac{\partial \phi
^A}{\partial x^i} \Big
]-\dot{\lambda_A}\frac{\partial\phi^A}{\partial y^\alpha}\\[10pt]
\hspace{4.07cm}-y_\gamma(C_{\alpha0}^\gamma+y^\beta C_{\alpha\beta}^\gamma)\\[4pt]
\phi^A=0,
\end{array}
\right .\end{equation} where $\phi^A=y^A-\Psi^A$ and
$\lambda_A=y_A-\frac{\partial L}{\partial y^A}$. Note that in
contrast to equations (\ref{eqvakafg}), equations
(\ref{eqvakafg2}) are expressed in terms of the global Lagrangian
$L:\A\to\R$. Thus, the equations (\ref{eqvakafg}) stress how the
information given by the Lagrangian $L$ outside $\M$ is irrelevant
to obtain the vakonomic equations. This is in contrast with what
happens in nonholonomic mechanics (see \cite{IMMS}). }
\end{remark}


Then, we know that there exist sections $X$ of ${\mathcal
T}^{\bidual}W_{0|W_1}\to W_1$ satisfying (\ref{problvakafg}).
However, $X$ doesn't belong, in general, to ${\mathcal
T}^{\bidual}W_1\subseteq \bidual\times TW_1$. In fact, one may
prove that the restriction to $W_1$ of
$X_{(\Upsilon_0,\Upsilon^a)}$ is a section of $\T^\bidual W_1\to
W_1$ if and only if
$$[\Upsilon^a(d^{\T^\bidual W_0}\varphi_b)(\V_a)=(d^{\T^\bidual
W_0}\varphi_b)(\Upsilon^a\V_a-X_{(\Upsilon_0,\Upsilon^a)})]_{|W_1},\;\forall\,
b.$$ Then we have a system of $(n-\bar{m})$ equations with
$(n-\bar{m})$ unknowns (the functions $\Upsilon^a$). Thus, if we
denote by ${\mathcal R}_{ab}$ and $\mu_b$ the functions
$$\begin{array}{rcl}
{\mathcal R}_{ab}&=&[(d^{\T^\bidual
W_0}\varphi_b)(\V_a)]_{|W_1}=\left(\displaystyle\frac{\partial^2\tilde{L}}{\partial
y^a\partial y^b}-y_A\frac{\partial^2\Psi^A}{\partial y^a\partial
y^b}\right)_{|W_1},\\
\mu_b&=&[(d^{\T^\bidual
W_0}\varphi_b)(\Upsilon^a\V_a-X_{(\Upsilon_0,\Upsilon^a)})]_{|W_1},
\end{array}$$
it is clear that the above system has a solution $\Upsilon^a$ if
the matrices ${\mathcal R}=({\mathcal R}_{ab})$ and ${\mathcal
R}_\mu=({\mathcal R}_{ab};\mu_b)$ have the same rank. Note that
even if the above system has a unique solution (i.e., if the
matrix ${\mathcal R}=({\mathcal R}_{ab})$ is regular), the
solution $\left(X_{(\Upsilon_0,\Upsilon^a)}\right)_{|W_1}$ is not,
in general, unique (since the function $(\Upsilon_0)_{|W_1}$ is
still arbitrary).

To solve the above problem, we consider a suitable submanifold
$W_1'$ of $W_1$ whose intrinsic definition is
$$W_1'=\{w\in W_1\,|\, H_{W_1}(w)=0\},$$
where $H_{W_1}:W_1\to\R$ is the restriction to $W_1$ of the
Pontryagin Hamiltonian $H_{W_0}$. In local coordinates, the
submanifold $W_1'$ is given by the equation
\begin{equation}\label{eqdefW1'}
y_0+y_A\Psi^A(x^i,y^b)+y_a y^a-\tilde{L}(x^i,y^b)=0.
\end{equation}

Let $\Omega_{W_1'}$ (respectively, $\eta_{W_1'}$) be the
restriction of $\Omega_{W_0}$ (respectively, $\eta$) to ${\mathcal
T}^{\bidual}W_1'$. Note that the restriction $\nu_1':W_1'\to Q$ of
$\nu:W_0\to Q$ to $W_1'$ is a fibration and, therefore, we can
consider the prolongation ${\mathcal T}^{\bidual}W_1'$ of the Lie
algebroid $\bidual$ over $\nu_1'$. Moreover, we have the following
result.

\begin{proposition}\label{prop6.15} $(\Omega_{W_1'},\eta_{W_1'})$
is a cosymplectic structure on ${\mathcal T}^{\bidual}W_1'$ if and
only if for any system of coordinates $(x^i,y_0, y_\alpha, y^a)$
on $W_0$ we have that
\[
\det({\mathcal R}_{ab})=\det\left( \frac{\partial^2
\tilde{L}}{\partial y^a
\partial y^b}-y_A\frac{\partial^2 {\Psi^{A}}}{\partial y^a
\partial y^b}\right)\not=0,\;\mbox{ for all point in }W_1'.
\]
\end{proposition}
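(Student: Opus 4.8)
The plan is to compute, in a convenient local frame, the exterior power $\eta_{W_1'}\wedge\Omega_{W_1'}\wedge\cdots^{n-\bar m}\cdots\wedge\Omega_{W_1'}$ and show it is nowhere-vanishing precisely when $\det({\mathcal R}_{ab})\neq 0$, since the two closedness conditions $d^{\T^\bidual W_1'}\Omega_{W_1'}=0$ and $d^{\T^\bidual W_1'}\eta_{W_1'}=0$ are automatic: $\Omega_{W_0}=\Omega+d^{\T^\bidual W_0}H_{W_0}\wedge\eta$ with $\Omega$ pulled back from the canonical symplectic section (hence $d$-closed) and $\eta=\Y^0$ a $1$-cocycle, so both survive restriction along the Lie subalgebroid inclusion $\T^\bidual W_1'\hookrightarrow\T^\bidual W_0$. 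So the whole content is the rank/non-degeneracy computation, and the first thing I would do is produce adapted local coordinates on $W_1'$ and a dual frame for $(\T^\bidual W_1')^*$.

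First I would use equations (\ref{eqdefW1'}) and $\varphi_a=0$ to eliminate $y_0$ and $y_a$: on $W_1$ the coordinate $y_a$ is determined by $\partial\tilde L/\partial y^a - y_A\,\partial\Psi^A/\partial y^a$, and on $W_1'$ additionally $y_0$ is determined by (\ref{eqdefW1'}). Hence $(x^i,y_A,y^a)$ is a coordinate system on $W_1'$, and a local basis of $\Gamma(\tau_\bidual^{\nu_1'})$ is obtained from $\{\Y_0,\Y_\alpha,\P^0,\P^\alpha,\V_a\}$ by discarding $\P^0$ and the $\P^a$ (those are the directions transverse to $W_1'$ inside the fibre) and keeping $\{\Y_0,\Y_\alpha,\P^A,\V_a\}$ — that is $n+\bar m$ generators for a bundle of rank $n+\bar m$, consistent with $\operatorname{rank}\T^\bidual W_1'=\operatorname{rank}\bidual+\dim W_1'-\dim Q=(n+1)+(m+\bar m-1)-m$. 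Wait — I should double-check this count against the ambient expression; in any case the key point is that $\V^a$ restricted to $\T^\bidual W_1'$ is no longer dual to a free $\V_a$ direction but gets coupled, and pulling back $\Omega_{W_0}$ from (\ref{OmegaHW0loc}) the term $\big(y_a+y_A\partial\Psi^A/\partial y^a-\partial\tilde L/\partial y^a\big)\V^a\wedge\Y^0$ vanishes identically on $W_1$, while the $\Y^\alpha\wedge{\mathcal P}_\alpha$ block must be re-expressed: ${\mathcal P}_a$ is not an independent covector on $\T^\bidual W_1'$, and when one substitutes the constraint $y_a=\partial\tilde L/\partial y^a-y_A\partial\Psi^A/\partial y^a$ the differential $d^{\T^\bidual W_1'}y_a$ picks up exactly $\sum_b\big(\partial^2\tilde L/\partial y^a\partial y^b-y_A\partial^2\Psi^A/\partial y^a\partial y^b\big)\V^b={\mathcal R}_{ab}\V^b$ modulo $\Y$-terms. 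This is where ${\mathcal R}_{ab}$ enters.

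The main step is then to collect the nondegenerate block of $\Omega_{W_1'}$. After the substitution, $\Omega_{W_1'}$ has the schematic form $\Y^A\wedge{\mathcal P}_A + {\mathcal R}_{ab}\,\Y^a\wedge\V^b + (\text{terms involving }\Y^0\text{ and the }\Y\wedge\Y\text{ part})$, together with $\eta_{W_1'}=\Y^0$. Wedging $\eta_{W_1'}$ with $(n-\bar m)$ copies of $\Omega_{W_1'}$, the $\Y^0$ factor kills every term of $\Omega_{W_1'}$ already containing $\Y^0$, so only the "horizontal" part $\Y^A\wedge{\mathcal P}_A+{\mathcal R}_{ab}\Y^a\wedge\V^b$ contributes; the $\bar m$ factors $\Y^A\wedge{\mathcal P}_A$ saturate the ${\mathcal P}_A$ directions, and the remaining $n-2\bar m$... — more carefully, $\Omega_{W_1'}^{\,n-\bar m}$ expands by the multinomial theorem and the unique top-degree surviving term is proportional to $(\Y^1\wedge\cdots\wedge\Y^{\bar m}\wedge{\mathcal P}_1\wedge\cdots\wedge{\mathcal P}_{\bar m})\wedge\big(\det({\mathcal R}_{ab})\,\Y^{\bar m+1}\wedge\cdots\wedge\Y^n\wedge\V^{\bar m+1}\wedge\cdots\wedge\V^n\big)$, and wedging with $\Y^0$ gives a nonzero multiple of the top form on $(\T^\bidual W_1')^*$ exactly when $\det({\mathcal R}_{ab})\neq 0$. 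Finally I would note coordinate-independence: under a change of adapted frame the matrix ${\mathcal R}_{ab}$ transforms by a (nowhere-singular) congruence coming from the Jacobian of the $y^a$-coordinates on $\M$, so the condition $\det({\mathcal R}_{ab})\neq0$ on $W_1'$ does not depend on the choices, justifying the "for any system of coordinates" phrasing; this also shows the stated condition is symmetric in the two directions of the iff.

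The hard part will be the careful bookkeeping of which covectors in (\ref{OmegaHW0loc}) become dependent after restriction to $W_1'$ — in particular correctly tracking how ${\mathcal P}_a$ and the differential of the now-constrained coordinate $y_a$ reorganize so that ${\mathcal R}_{ab}$ appears with the right index placement — and making sure no hidden $\V\wedge\V$ or $\V\wedge{\mathcal P}$ terms are generated that could spoil (or, worse, accidentally rescue) the top-form computation. Once the $\Y^0$-annihilation is used to discard all the "affine-connection" terms $C_{\alpha 0}^\gamma,C_{\alpha\beta}^\gamma,\rho_\alpha^i(\cdots)$, the remaining algebra is the standard cosymplectic-rank computation and is routine.
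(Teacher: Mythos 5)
Your route is genuinely different from the paper's. The paper never builds an adapted frame on $W_1'$ nor computes $\Omega_{W_1'}$ explicitly: it argues pointwise on kernels. Taking $Z\in\ker\Omega_{W_1'}(w_1')\cap\ker\eta_{W_1'}(w_1')$, it observes from (\ref{OmegaHW0loc}) that $i_{\P^0}\Omega_{W_0}=0$ and $i_{\V_a}\Omega_{W_0}=\varphi_a\Y^0=0$ on $W_1$, while $\P^0(w_1')$ and $\V_a(w_1')$ are transverse to $\T^\bidual_{w_1'}W_1'$ (because $d^{\T^\bidual W_0}H_{W_0}(\P^0)=1$ and $d^{\T^\bidual W_0}\varphi_b(\V_a)=\pm{\mathcal R}_{ab}$); hence $Z$ kills all of $\T^\bidual_{w_1'}W_0$, so $Z\in\ker\Omega_{W_0}\cap\ker\eta=\spn\{\P^0,\V_a\}$, and tangency of $Z$ to $W_1'$ then gives $\lambda^a{\mathcal R}_{ab}=0$ and $\lambda_0=0$. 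This buys a very short argument with no frame bookkeeping; your volume-form computation buys an explicit formula for $\Omega_{W_1'}$ (close to the one the paper only derives later, in Remark \ref{obs6.18} and the display before (\ref{loczeta1})) and makes the ``if and only if'' and the coordinate-independence visible in one stroke. Your two worries do resolve favourably: the cross terms $\Y^a\wedge\P_A$ and the $\Y\wedge\Y$ block cannot contribute to the top term because every $\V^b$ must be supplied by an ${\mathcal R}_{ab}\Y^a\wedge\V^b$ factor and there are no $\P\wedge\P$, $\P\wedge\V$ or $\V\wedge\V$ terms, so the relevant Pfaffian is $\pm\det({\mathcal R}_{ab})$ regardless.

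There is, however, a concrete arithmetic slip you must fix. Since $\dim W_0=m+2n+1-\bar m$, $W_1$ is cut out by the $n-\bar m$ equations $\varphi_a=0$ and $W_1'$ by one further equation, one gets $\dim W_1'=m+n$ (the paper's coordinates $(x^i,y_\alpha)$ confirm this), not $m+\bar m-1$; hence $\operatorname{rank}\T^\bidual W_1'=(n+1)+(m+n)-m=2n+1$, not $n+\bar m$. Consistently, your own frame $\{\Y_0,\Y_\alpha,\P^A,\V_a\}$ has $1+n+\bar m+(n-\bar m)=2n+1$ elements, and your ``unique surviving term'' is a $2n$-form, which requires wedging $\eta_{W_1'}$ with $n$ copies of $\Omega_{W_1'}$ ($\bar m$ factors of type $\Y^A\wedge\P_A$ and $n-\bar m$ of type ${\mathcal R}_{ab}\Y^a\wedge\V^b$), not $n-\bar m$ copies. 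Once the exponent is corrected to $n$ the computation closes as you describe.
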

\begin{proof} It is clear that $d^{{\mathcal
T}^{\bidual}W_1'}\Omega_{W_1'}=0$ and $d^{{\mathcal
T}^{\bidual}W_1'}\eta_{W_1'}=0$.

Now, suppose that the matrix $({\mathcal R}_{ab})$ is regular.
Since the rank of the Lie algebroid $\T^\bidual W_1'\to W_1'$ is
$(2n+1)$, we have to prove that
$$\ker\Omega_{W_1'}(w_1')\cap\ker\eta_{W_1'}(w_1')=\{0\},\;\forall\,
w_1'\in W_1'.$$ Now, let
$Z\in\ker\Omega_{W_1'}(w_1')\cap\ker\eta_{W_1'}(w_1').$ {}From
(\ref{OmegaHW0loc}), it follows that
$$(i_Z\Omega_{W_0}(w_1'))(\P^0(w_1'))=(i_Z\Omega_{W_0}(w_1'))(\V_a(w_1'))=0,\;\mbox{
for all }\;a.$$ On the other hand,
$$(d^{\T^\bidual W_0}H_{W_0})(w_1')(\P^0(w_1'))=1,\;\;(d^{\T^\bidual
W_0}\varphi_b)(w_1')(\V_a(w_1'))={\mathcal R}_{ab}(w_1'),\,\mbox{
for all }\,b.$$ Thus,
$\P^0(w_1')\not\in\T^\bidual_{w_1'}W_1',\;\;\V_a(w_1')\not\in\T^\bidual_{w_1'}W_1'$
and $Z\in\ker\Omega_{W_0}(w_1')\cap\ker\eta(w_1').$ This implies
that $Z=\lambda_0 \P^0(w_1')+\lambda^a \V_a(w_1').$ Therefore,
since $Z\in\T^\bidual_{w_1'}W_1'$, we have that
$$\begin{array}{c}
0=\lambda_0 (d^{\T^\bidual
W_0}\varphi_b)(w_1')(\P^0(w_1'))+\lambda^a (d^{\T^\bidual
W_0}\varphi_b)(w_1')(\V_a(w_1'))=\lambda^a {\mathcal
R}_{ab}(w_1'),
\end{array}$$
for all $b$, and, consequently, $\lambda^a=0$, for all $a$. Thus,
$Z=\lambda_0 \P^0(w_1')$ and
$$0=\lambda_0 (d^{\T^\bidual
W_0}H_{W_0})(w_1')(\P^0(w_1'))=\lambda_0,$$ that is, $Z=0$.

The converse is proved in a similar way.
\end{proof}



\begin{remark}\label{obs6.18}{\rm
We remark that the condition $\det\left( {\mathcal R}_{ab}
\right)\not=0$ implies that the matrix $\left( \displaystyle
\frac{\partial \varphi_{a}}{\partial y^{b}} \right)_{a, b =
\bar{m}+1, \dots , n}$ is regular. Thus, using the implicit
theorem function, we deduce that there exist open subsets
$\bar{W}_{0} \subseteq W_{0}$, $\tilde{W} \subseteq \R^{m+n}$ and
smooth real functions $\mu^{a}: \tilde{W} \to \R,$ $a = \bar{m}
+1, \dots, n,$ such that $W_{1} \cap \bar{W}_0$ is locally defined
by the equations
$$\begin{array}{rcl} y^{a} &=& \mu^{a}(x^i,
y_{\alpha}), \;\; a = \bar{m} + 1, \dots, n.
\end{array}$$
Therefore, we may consider $(x^i,y_0,y_\alpha)$ as local
coordinates on $W_1$ and, consequently, from (\ref{eqdefW1'}), we
obtain that $(x^i,y_\alpha)$ are local coordinates on $W_1'$.
Thus, a local basis of sections of ${\mathcal
T}^{\bidual}W_{1}'\to W_1'$ is given by $\{\Y_{01'},\Y_{\alpha
1'},\P_{1'}^\alpha\}$, where
$$\begin{array}{c}
{\mathcal Y}_{0 1'} = \Big({\mathcal
Y}_{0}+\rho_0^i\Big(\displaystyle\frac{\partial\tilde{L}}{\partial
x^i}-y_A\frac{\partial\Psi^A}{\partial x^i}\Big)\P^0 +
\rho^{i}_{0} \displaystyle \frac{\partial \mu^{a}}{\partial x^{i}}
{\mathcal V}_{a}
\Big)_{|W_{1}'},\\[12pt]
 {\mathcal Y}_{\alpha 1'} = \Big({\mathcal Y}_{\alpha} +\rho_\alpha^i\Big(\displaystyle\frac{\partial\tilde{L}}{\partial
x^i}-y_A\frac{\partial\Psi^A}{\partial x^i}\Big)\P^0+
\rho^{i}_{\alpha} \displaystyle \frac{\partial \mu^{a}}{\partial
x^{i}} {\mathcal V}_{a}
\Big)_{|W_{1}'},\\[12pt]
{\mathcal P}_{1'}^{A} =\Big({\mathcal P}^{A}-\Psi^A\P^0 +
\displaystyle \frac{\partial \mu^{a}}{\partial y_{A}} {\mathcal
V}_{a}\Big)_{|W_{1}'},\;\; {\mathcal P}_{1'}^{a} = \Big({\mathcal
P}^{a} -\mu^a\P^0+ \displaystyle \frac{\partial \mu^{b}}{\partial
y_{a}} {\mathcal V}_{b}\Big)_{|W_{1}'}.
\end{array}
$$
}
\end{remark}

Proceeding as in the proof of Proposition \ref{prop6.15}, we
deduce the following result.

\begin{theorem}\label{th6.19}
If $(\Omega_{W_1'},\eta_{W_1'})$ is a cosymplectic structure on
the Lie algebroid $\tau_\bidual^{\nu_1'}:{\mathcal
T}^{\bidual}W_{1}' \to W_{1}'$ then there exists a unique section
$\zeta_{1}\in\Gamma(\tau_\bidual^{\nu_1'})$ solution of the
vakonomic problem $(L, \M)$. In fact, $\zeta_{1}$ is the Reeb
section of $(\Omega_{W_1'},\eta_{W_1'})$, that is, $\zeta_1$ is
characterized by the conditions $i_{\zeta_{1}}\Omega_{W_1'}
=0\;\mbox{ and }\;i_{\zeta_1}\eta_{W_1'}=1.$
\end{theorem}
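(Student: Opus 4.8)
The statement asserts existence and uniqueness of a section $\zeta_1$ solving the vakonomic problem, identified with the Reeb section of the cosymplectic structure $(\Omega_{W_1'},\eta_{W_1'})$. The natural approach is to exploit the general fact that a cosymplectic structure $(\Omega,\eta)$ on a Lie algebroid of rank $2n+1$ (here $\tau_\bidual^{\nu_1'}\colon\T^\bidual W_1'\to W_1'$, whose rank we computed to be $2n+1$ in Remark \ref{obs6.18}) admits a unique Reeb section $R$ characterized by $i_R\Omega=0$, $i_R\eta=1$: this is a pointwise linear-algebra statement, since at each $w$ the map $Z\mapsto(i_Z\Omega(w),i_Z\eta(w))$ from the $(2n+1)$-dimensional fibre to $(\T^\bidual_w W_1')^*$ has kernel $\ker\Omega(w)\cap\ker\eta(w)=\{0\}$ (by hypothesis the structure is cosymplectic) and hence, by dimension count on $\Omega$ together with the nondegeneracy supplied by $\eta$, is an isomorphism onto an appropriate subspace, giving a unique preimage of $(0,1)$. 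So the first step is: invoke the cosymplectic hypothesis to produce $\zeta_1$ as the Reeb section, existing and unique as a section of $\tau_\bidual^{\nu_1'}$.

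**Second step: $\zeta_1$ solves the vakonomic problem.** Here I would use the morphism relating $\T^\bidual W_1'$ to $\T^\bidual W_0$. Recall $\Omega_{W_1'}$ and $\eta_{W_1'}$ are by definition the restrictions of $\Omega_{W_0}$ and $\eta$ to $\T^\bidual W_1'$ (via the prolongation of the inclusion $W_1'\hookrightarrow W_0$). If $j\colon\T^\bidual W_1'\to\T^\bidual W_0$ denotes that prolongation and $\bar\zeta_1=j\circ\zeta_1$ is the induced section of $\T^\bidual W_0$ along $W_1'$, then $i_{\bar\zeta_1}\Omega_{W_0}$ restricted to $\T^\bidual W_1'$ vanishes and $i_{\bar\zeta_1}\eta=1$. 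To upgrade "vanishes on $\T^\bidual W_1'$" to "vanishes on all of $\T^\bidual W_0$", I would argue exactly as in the proof of Proposition \ref{prop6.15}: from the local expression \eqref{OmegaHW0loc} one checks $(i_{\bar\zeta_1}\Omega_{W_0})(\P^0)=(i_{\bar\zeta_1}\Omega_{W_0})(\V_a)=0$ automatically on $W_1'$ (because $\bar\zeta_1$, being tangent to $W_1'$, has no $\P^0$-component and its $\V_a$-components, together with the structure of $\Omega_{W_0}$, annihilate those basis elements — these are precisely the "$\varphi_a=0$" and "$H_{W_0}=0$" directions transverse to $W_1'$), and since $\{\P^0,\V_a\}$ together with $\T^\bidual W_1'$ span $\T^\bidual W_0$ pointwise, we get $i_{\bar\zeta_1}\Omega_{W_0}=0$ on $W_1'$. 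Combined with $i_{\bar\zeta_1}\eta=1$, this shows $\zeta_1$ (more precisely its image in $\T^\bidual W_0$) satisfies \eqref{problvakafg}; and any solution must lie over $W_1$, in fact over $W_1'$ once one also imposes the $H_{W_0}=0$ consistency from Theorem \ref{th6.19}'s setup, which pins it down to the Reeb section. The converse inclusion (any solution of the vakonomic problem that is a section of $\T^\bidual W_1'$ equals $\zeta_1$) is immediate from uniqueness of the Reeb section.

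**Main obstacle.** The genuinely delicate point is the transversality bookkeeping in step two: showing that a section $X$ satisfying \eqref{problvakafg} on $W_0$, when restricted to $W_1'$, actually takes values in the sub-Lie-algebroid $\T^\bidual W_1'$ so that it can be identified with the Reeb section of the restricted structure — i.e., that the second-constraint passage $W_1\rightsquigarrow W_1'$ is the right one and the algorithm stabilizes there. This is where the regularity of $({\mathcal R}_{ab})$ does the work: exactly as in Remark \ref{obs6.18}, it lets one solve for $y^a=\mu^a(x^i,y_\alpha)$ and for the tangency multipliers $\Upsilon^a$, and it is what forces the "still-arbitrary" function $\Upsilon_0$ to be determined by the condition $H_{W_1}=0$ defining $W_1'$. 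I would therefore structure the proof as: (1) cosymplectic $\Rightarrow$ unique Reeb section (pointwise linear algebra); (2) the Reeb section, pushed to $\T^\bidual W_0$, satisfies \eqref{problvakafg}, using the Prop.\ \ref{prop6.15}-style computation with \eqref{OmegaHW0loc}; (3) conversely any solution tangent to $W_1'$ coincides with it by uniqueness — noting that step (2)'s computation is the mild repetition the paper signals by "Proceeding as in the proof of Proposition \ref{prop6.15}".
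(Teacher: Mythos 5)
Your proposal is correct and follows exactly the route the paper intends: the paper offers no written proof beyond the remark that one ``proceeds as in the proof of Proposition \ref{prop6.15}'', and your three steps (the cosymplectic hypothesis gives a unique Reeb section by pointwise linear algebra; the computation with (\ref{OmegaHW0loc}) showing that $\P^0$ and $\V_a$ complete $\T^{\bidual}W_1'$ to $\T^{\bidual}W_0$ along $W_1'$ and that $i_{\zeta_1}\Omega_{W_0}$ annihilates them there; uniqueness by restriction) are precisely that argument. One small correction to your parenthetical: the vanishing of $(i_{\zeta_1}\Omega_{W_0})(\P^0)$ and $(i_{\zeta_1}\Omega_{W_0})(\V_a)$ has nothing to do with the components of $\zeta_1$ --- it holds for every $Z$ because, by (\ref{OmegaHW0loc}), $i_{\P^0}\Omega_{W_0}=0$ identically and $i_{\V_a}\Omega_{W_0}=\varphi_a\,\eta$ with $\varphi_a=0$ on $W_1\supseteq W_1'$.
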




The above results suggest us to introduce the following
definition.

\begin{definition}
The vakonomic system $(L, \M)$ on the Lie affgebroid $\tau_\A: \A
\to Q$ is said to be {\rm regular} if the pair
$(\Omega_{W_1'},\eta_{W_1'})$ is a cosymplectic structure on the
Lie algebroid $\tau_\bidual^{\nu_1'}:{\mathcal T}^{\bidual}W_{1}'
\to W_{1}'$.
\end{definition}

In what follows, we will suppose that $(L,\M)$ is a regular
vakonomic system on the Lie affgebroid $\A$. Then, from Theorem
\ref{th6.19}, we have that the vakonomic problem has a unique
solution which is the Reeb section $\zeta_1$ of the cosymplectic
structure $(\Omega_{W_1'},\eta_{W_1'})$.

First, we will give the local expression of the solution section
$\zeta_1$. Suppose that $(x^{i}, y_{\alpha})$ are local
coordinates on $W_{1}'$ as in Remark \ref{obs6.18} and that
$\{\Y_{01'}, {\mathcal Y}_{\alpha 1'}, {\mathcal P}^{\alpha}_{1'}
\}$ is the corresponding local basis of
$\Gamma(\tau_\bidual^{\nu_1'})$. Then, if $\{\Y^{0}_{1'},
{\mathcal Y}^{\alpha}_{1'}, {\mathcal P}_{\alpha 1'}\}$ is the
dual basis of $\{\Y_{01'}, {\mathcal Y}_{\alpha 1'}, {\mathcal
P}^{\alpha}_{1'} \}$, we have that (see (\ref{OmegaHW0loc}))
$$\begin{array}{rcl} \Omega_{W_1'}& = &{\mathcal Y}^{\alpha}_{1'}
\wedge {\mathcal P}_{\alpha 1'} + \displaystyle \frac{1}{2}
{C}_{\alpha\beta}^{\gamma}y_{\gamma} {\mathcal Y}^{\alpha}_{1'}
\wedge {\mathcal
Y}^{\beta}_{1'}+\Psi^A\P_{A1'}\wedge\Y^0_{1'}\\[8pt]
&+&\mu^a\P_{a1'}\wedge\Y^0_{1'}+\Big[\rho_\alpha^i\Big(y_A\displaystyle\frac{\partial\Psi^A}{\partial
x^i}-\frac{\partial\tilde{L}}{\partial x^i}\Big)+C_{\alpha 0
}^\gamma y_\gamma\Big]\Y^\alpha_{1'}\wedge\Y^0_{1'}.
\end{array}$$
Thus, we obtain that
\begin{equation}\label{loczeta1}
\begin{array}{rcl}
\zeta_{1}(x^{j}, y_{\beta}) &=& \Y_{01'}+\mu^{a}(x^{j}, y_{\beta})
{\mathcal Y}_{a1'} + \Psi^{A}(x^{j}, \mu^{a}(x^{j}, y_{\beta}))
{\mathcal
Y}_{A1'}\\[8pt]& -& \Big[y_\gamma\Big(C_{\alpha 0}^\gamma+
\Psi^{A}(x^{j}, \mu^{a}(x^{j}, y_{\beta})){C}^{\gamma}_{\alpha A}+\mu^{a}(x^{j}, y_{\beta}){C}_{\alpha a}^{\gamma}\Big)\\[8pt]
& +&\rho^{i}_{\alpha} \Big( y_{A} \displaystyle \frac{\partial
\Psi^{A}}{\partial x^{i}}_{|(x^{j}, \mu^{a}(x^{j}, y_{\beta}))} -
\frac{\partial \tilde{L}}{\partial x^{i}}_{|(x^{j}, \mu^{a}(x^{j},
y_{\beta}))}\Big) \Big]{\mathcal P}_{1'}^{\alpha}.
\end{array}
\end{equation}

Now, we will introduce an aff-Poisson bracket on the AV-bundle
determined by the constraint submanifolds $W_1$ and $W_1'$. For
this propose, we define the application $\mu_1:W_1\to W_1'$ given
by
$$\mu_1(\varphi,\a)=(\varphi-H_{W_1}(\varphi,\a)1_\A(x),\a),$$
for $(\varphi,\a)\in W_1\subseteq W_0=\dual\oplus_Q\M$, with
$\nu_1(\varphi,\a)=x\in Q$.

If $(x^i,y_0,y_\alpha)$ (respectively, $(x^i,y_\alpha)$) are local
coordinates on $W_1$ (respectively, $W_1'$) as in Remark
\ref{obs6.18}, we deduce that the local expression of
$\mu_1:W_1\to W_1'$ is
$$\mu_1(x^i,y_0,y_\alpha)=(x^i,y_\alpha).$$

Moreover, we have the following result.

\begin{theorem}\label{th6.21} If $(L,\M)$ is a regular vakonomic system on
 the Lie affgebroid $\tau_\A:\A\to Q$, then
$\mu_1:W_1\to W_1'$ is a AV-bundle which admits an aff-Poisson
structure.
\end{theorem}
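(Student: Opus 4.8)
The plan is to produce the AV-bundle structure on $\mu_1\colon W_1\to W_1'$ by exhibiting the fibrewise $\R$-action explicitly, and then to obtain the aff-Poisson bracket by transporting the cosymplectic data on $\T^\bidual W_1'$ via the correspondence between cosymplectic structures and Hamiltonian dynamics, exactly as in the passage from the cosymplectic pair $(\Omega_h,\eta)$ to Theorem \ref{teor2.2}. First I would check that $\mu_1$ is an affine bundle of rank $1$. From the intrinsic formula $\mu_1(\varphi,\a)=(\varphi-H_{W_1}(\varphi,\a)1_\A(x),\a)$ and the local expression $\mu_1(x^i,y_0,y_\alpha)=(x^i,y_\alpha)$ computed just before the statement, the fibre of $\mu_1$ over a point $(x^i,y_\alpha)\in W_1'$ is the set of $(x^i,y_0,y_\alpha)\in W_1$, i.e.\ a copy of $\R$ parametrised by the coordinate $y_0$; the natural $\R$-action is translation $y_0\mapsto y_0+t$, which is globally well defined because $1_\A$ is a nowhere-vanishing section of $\tau_{\A^+}$ and the difference of two points in a fibre of $\mu_1$ is canonically a multiple of $1_\A(x)$. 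Hence $\mu_1\colon W_1\to W_1'$ is modelled on the trivial line bundle $W_1'\times\R\to W_1'$, which is precisely the definition of an AV-bundle; the associated vertical vector field $X_{W_1}$ is $\partial/\partial y_0$.

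Next I would produce the aff-Poisson bracket. By the result of \cite{GGU2} recalled in the excerpt, it suffices to exhibit an $X_{W_1}$-invariant Poisson $2$-vector $\Pi$ on $W_1$, and then set $\{h_1,h_2\}\comp\mu_1=\{F_{h_1},F_{h_2}\}_\Pi$. The candidate is the Poisson structure determined by the cosymplectic pair $(\Omega_{W_1'},\eta_{W_1'})$ on $\T^\bidual W_1'$: since $(L,\M)$ is regular, Theorem \ref{th6.19} gives a Reeb section $\zeta_1$, and a cosymplectic Lie algebroid carries a canonical Poisson structure on its base $W_1'$ (the one for which the Hamiltonian vector field of a function is the $\rho_\bidual^{\nu_1'}$-image of the $(\Omega_{W_1'},\eta_{W_1'})$-Hamiltonian section). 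Pulling this Poisson structure on $W_1'$ back along $\mu_1$ and adding the canonical term coming from the $\R$-direction — concretely, using the fibred coordinates $(x^i,y_0,y_\alpha)$ on $W_1$ in which $\mu_1$ is the projection killing $y_0$ — yields a $2$-vector $\Pi$ on $W_1$; the $y_0$-independence of all the structure functions appearing in $\Omega_{W_1'}$ (visible in \eqref{loczeta1} and in the local basis of Remark \ref{obs6.18}, where $y_0$ never occurs) makes $\mathcal{L}_{X_{W_1}}\Pi=0$ automatic, and the Jacobi identity for $\Pi$ follows from $d^{\T^\bidual W_1'}\Omega_{W_1'}=0$ together with $d^{\T^\bidual W_1'}\eta_{W_1'}=0$ (established in Proposition \ref{prop6.15}). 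Translating back through the \cite{GGU2} correspondence gives the bi-affine bracket $\{\cdot,\cdot\}$ on $\Gamma(\mu_1)$, and its three defining properties (skew-symmetry, Jacobi for the affine-linear part, affine-derivation in each slot) are exactly the reformulations of skew-symmetry, Jacobi and the Leibniz rule for $\Pi$.

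I expect the main obstacle to be the bookkeeping needed to check that $X_{W_1}$-invariance of $\Pi$ is genuinely intrinsic rather than an artefact of the chosen coordinates: one must verify that the translation action $y_0\mapsto y_0+t$ does not depend on the local basis $\{e_0,e_\alpha\}$ adapted to $1_\A$, which amounts to the observation that a change of adapted basis changes $y_0$ only by a term linear in the $y_\alpha$ and a function of $x^i$, hence commutes with the translation; and that the Pontryagin Hamiltonian $H_{W_1}$, which defines $\mu_1$, transforms compatibly. A cleaner route, which I would adopt if the coordinate argument becomes heavy, is to mimic verbatim the proof of the analogous statement for $\mu\colon\A^+\to V^*$ in Theorem \ref{teor2.2}(i) from \cite{IMPS}: there the affine structure on $\A^+$ over $V^*$ is already an AV-bundle, and the restriction of that construction to the submanifolds $W_1\subseteq\dual\oplus_Q\M$ and $W_1'$ is compatible with the inclusions, so the aff-Poisson bracket on $\mu$ restricts to one on $\mu_1$ once regularity guarantees (via Proposition \ref{prop6.15}) that $W_1$ and $W_1'$ are coisotropic-type submanifolds on which the reduction is well posed. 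Either way, the regularity hypothesis $\det(\mathcal{R}_{ab})\neq0$ on $W_1'$ is what is used, precisely as in Proposition \ref{prop6.15}, to ensure the reduced bracket is well defined.
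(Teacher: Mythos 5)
Your construction of the AV-bundle structure on $\mu_1\colon W_1\to W_1'$ agrees with the paper's: the fibrewise action is $(\varphi,\a)+t=(\varphi+t\,1_\A(x),\a)$ and $X_{W_1}=\partial/\partial y_0$. The gap is in the construction of the aff-Poisson bracket. Your primary route builds a bivector $\Pi$ on $W_1$ as the coordinate pull-back of the Poisson structure induced on $W_1'$ by the cosymplectic pair $(\Omega_{W_1'},\eta_{W_1'})$, plus ``the canonical term coming from the $\R$-direction''. But the components of the required bivector along $\partial/\partial y_0$ are
$\rho_0^i\,\frac{\partial}{\partial x^i}\wedge\frac{\partial}{\partial y_0}-C_{0\alpha}^{\gamma}y_\gamma\,\frac{\partial}{\partial y_0}\wedge\frac{\partial}{\partial y_\alpha}$ (see (\ref{PoissW1})): they involve the anchor and structure functions of $e_0$ and are in no sense canonical, and your recipe gives no way to produce them. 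These terms are not optional bookkeeping: they carry the entire affine part of the bracket (the terms in $\rho_0^i$ and $C^\gamma_{\alpha 0}$ in (\ref{corvakloc})), and without a specific choice of them the Jacobi identity for the total bivector does not follow from $d^{\T^\bidual W_1'}\Omega_{W_1'}=0$ and $d^{\T^\bidual W_1'}\eta_{W_1'}=0$ alone --- one needs the mixed Schouten brackets between the lifted part and the $y_0$-part to vanish, which is exactly what the specific form of these terms ensures.

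The mechanism the paper actually uses, and which your fallback route gestures at but misidentifies, is the following: regularity implies (Remark \ref{obs6.18}) that $(x^i,y_0,y_\alpha)$ are local coordinates on $W_1$, i.e.\ that $(\pi_1)_{|W_1}\colon W_1\to\dual$ is a local diffeomorphism (this is items i) and ii) of Corollary \ref{cor6.22}, not a coisotropic reduction --- ``restricting'' a Poisson structure to a submanifold is not well posed in general, whereas transporting it through a local diffeomorphism is). One then pulls back the canonical symplectic section $\Omega_\bidual$ of $\T^\bidual\dual$ through the Lie algebroid morphism $(\T(\pi_1)_{|W_1},(\pi_1)_{|W_1})$ to obtain a genuinely nondegenerate section $\Omega_{W_1}$ of $\Lambda^2(\T^\bidual W_1)^*\to W_1$; its Poisson bivector $\Pi_{W_1}$ is then automatically Poisson, has $y_0$-independent coefficients (hence is $X_{W_1}$-invariant), and contains the correct $\partial/\partial y_0$-terms because $\Omega_\bidual$ already does. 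The cosymplectic pair $(\Omega_{W_1'},\eta_{W_1'})$, which encodes the dynamics through $H_{W_0}$, plays no role in defining the bracket; it enters only afterwards, through the distinguished section $h_1$ and the evolution theorem that follows.
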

\begin{proof} It is easy to prove that $\mu_1:W_1\to W_1'$ is
an AV-bundle (see Section \ref{sec2.5}). In
fact, if $w=(\varphi,\a)\in (W_1)_x$, with $x\in Q$, and $t\in\R$
then
$$w+t=(\varphi,\a)+t=(\varphi+t1_\A(x),\a).$$

To define an aff-Poisson bracket on $\mu_1$ we will introduce a
Poisson bracket on $W_1$ which is invariant with respect to
$X_{W_1}$. Here, $X_{W_1}$ is the infinitesimal generator of the
principle action of $\R$ on $W_1$.

Consider the prolongation $\T (\pi_1)_{|W_1}:\T^\bidual
W_1\to\T^\bidual\dual$ of the restriction $(\pi_1)_{|W_1}:W_1\to
\A^+$ to $W_1$ of the application $\pi_1=\pr_{1|W_0}:W_0\to\dual$.
It is clear that $(\T (\pi_1)_{|W_1},(\pi_1)_{|W_1})$ is a Lie
algebroid morphism and, therefore, we can introduce the 2-section
$\Omega_{W_1}\in\Gamma(\wedge^2(\tau_\bidual^{\nu_1})^*)$ defined
by
$$\Omega_{W_1}=(\T (\pi_1)_{|W_1},(\pi_1)_{|W_1})^*\Omega_\bidual,$$
$\Omega_\bidual$ being the canonical symplectic 2-section on
$\T^\bidual\dual$. Obviously $d^{\T^\bidual W_1}\Omega_{W_1}=0$.

If $(x^i,y_0,y_\alpha)$ are local coordinates on $W_1$ as in
Remark \ref{obs6.18}, we can consider the local basis of sections
$\{\Y_{01},\Y_{\alpha 1}, \P^0_1,\P^\alpha_1\}$ of $\T^\bidual
W_1\to W_1$ given by
$$\begin{array}{rclcrcl}
{\mathcal Y}_{0 1}&=& \Big({\mathcal Y}_{0} + \rho^{i}_{0}
\displaystyle \frac{\partial \mu^{a}}{\partial x^{i}} {\mathcal
V}_{a} \Big)_{|W_{1}},&\;\;& {\mathcal Y}_{\alpha 1}& =&
\Big({\mathcal Y}_{\alpha} + \rho^{i}_{\alpha} \displaystyle
\frac{\partial \mu^{a}}{\partial x^{i}} {\mathcal V}_{a}
\Big)_{|W_{1}},\\[10pt]
{\mathcal P}_{1}^{0}& = &({\mathcal P}^{0})_{|W_{1}},&\;\;&
{\mathcal P}_{1}^{\alpha}& =& \Big({\mathcal P}^{\alpha} +
\displaystyle \frac{\partial \mu^{a}}{\partial y_{\alpha}}
{\mathcal V}_{a}\Big)_{|W_{1}}.
\end{array}
$$

If $\{\Y^{0}_1,\Y^{\alpha}_1, \P_{01},\P_{\alpha 1}\}$ is the dual
basis of $\{\Y_{01},\Y_{\alpha 1}, \P^0_1,\P^\alpha_1\}$, we
obtain that
$$\Omega_{W_1}=\Y^0_1\wedge\P_{01}+\Y^\alpha_1\wedge\P_{\alpha1}+C_{0\alpha}^\gamma
y_\gamma\Y^0_1\wedge\Y^\alpha_1+\displaystyle\frac{1}{2}C_{\alpha\beta}^\gamma
y_\gamma\Y^\alpha_1\wedge\Y^\beta_1.$$

Thus, we deduce that $\Omega_{W_1}$ is a symplectic section of
$\T^\bidual W_1\to W_1$ and, therefore, it induces a Poisson
bracket on $W_1$ which is given by
$$\{F,G\}_{W_1}=\Omega_{W_1}({\mathcal H}_F^{\Omega_{W_1}},{\mathcal
H}_G^{\Omega_{W_1}})=\rho_\bidual^{\nu_1}({\mathcal
H}_G^{\Omega_{W_1}})(F), \; \; \mbox{ for } F, G \in
C^{\infty}(W_{1}),$$ where ${\mathcal H}_F^{\Omega_{W_1}}$ and
${\mathcal H}_G^{\Omega_{W_1}}$ are the Hamiltonian sections
associated with the functions $F$ and $G$, respectively, with
respect to the symplectic structure $\Omega_{W_1}$.

Moreover, the Poisson 2-vector $\Pi_{W_1}$ determinated by the
bracket $\{\cdot , \cdot \}_{W_1}$ is invariant with respect to
$X_{W_1}$. In fact, we have that
$X_{W_1}=\displaystyle\frac{\partial}{\partial y_0}$ and
\begin{equation}\label{PoissW1}
\begin{array}{rcl}
\Pi_{W_1}&=&\rho_0^i\displaystyle\frac{\partial}{\partial
x^i}\wedge\frac{\partial}{\partial
y_0}+\rho_\alpha^i\displaystyle\frac{\partial}{\partial
x^i}\wedge\frac{\partial}{\partial y_\alpha}\\[10pt]
&&-C_{0\alpha}^\gamma
y_\gamma\displaystyle\frac{\partial}{\partial
y_0}\wedge\frac{\partial}{\partial
y_\alpha}-\displaystyle\frac{1}{2}C_{\alpha\beta}^\gamma
y_\gamma\frac{\partial}{\partial
y_\alpha}\wedge\frac{\partial}{\partial y_\beta}.
\end{array}\end{equation}

Thus, we conclude that $\mu_1:W_1\to W_1'$ admits an aff-Poisson
structure which we denote by $\{\cdot , \cdot
\}_{vak}:\Gamma(\mu_1) \times \Gamma(\mu_1) \to C^\infty(W_1').$
This structure is characterized by the following condition
$$\{h_1',h_1''\}_{vak}\comp\mu_1=\{F_{h_1'},F_{h_1''}\}_{W_1},\;\mbox{
for }\;h_1',h_1''\in\Gamma(\mu_1),$$ $F_{h_1'},F_{h_1''}$ being
the real functions on $W_1$ associated with the sections
$h_1',h_1''$ (as we know, $X_{W_1}(F_{h_1'})=
X_{W_1}(F_{h_1''})=-1$).

\end{proof}

The aff-Poisson bracket on the AV-bundle $\mu_1:W_1\to W_1'$,
$$\{\cdot , \cdot \}_{vak}:\Gamma(\mu_1) \times \Gamma(\mu_1) \to
C^\infty(W_1'),$$ is called {\it the vakonomic bracket associated
with the regular system} $(L,\M)$.

On the other hand, note that the restriction $H_{W_1}$ to $W_1$ of
the Pontryagin Hamiltonian $H_{W_0}$ verifies that
$X_{W_1}(-H_{W_1})=-1.$ Therefore, there exists
$h_1\in\Gamma(\mu_1)$ such that $F_{h_1}=-H_{W_1}.$ In fact, $h_1$
is the inclusion of $W_1'$ into $W_1$. Moreover, we have the
following result.

\begin{theorem} If $F_1'\in C^\infty(W_1')$ then the temporal evolution of $F_1'$, $\dot{F}_1'$, is given by
$$\dot{F}_1'=\{h_1,F_1'\}_{vak}^{al},$$
$\{\cdot,\cdot\}_{vak}^{al}$ being the affine-linear part of the
bi-affine bracket $\{\cdot,\cdot\}_{vak}$. In other words, the
Hamiltonian vector field associated with $h_1$ with respect to the
vakonomic bracket coincides with $\rho_\bidual^{\nu_1'}(\zeta_1)$.
\end{theorem}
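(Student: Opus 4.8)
The plan is to establish the identity $\dot F_1' = \{h_1,F_1'\}_{vak}^{al}$ by unwinding the definition of the vakonomic bracket in terms of the $X_{W_1}$-invariant Poisson bracket $\{\cdot,\cdot\}_{W_1}$ on $W_1$, and then using the correspondence between aff-Poisson brackets and invariant Poisson brackets recalled in Section \ref{sec2.5}. First I would recall that by definition $h_1\in\Gamma(\mu_1)$ is the section whose associated function on $W_1$ is $F_{h_1}=-H_{W_1}$, and that $\mu_1^*(\dot F_1')$ should be interpreted via the flow of $\rho_\bidual^{\nu_1'}(\zeta_1)$, where $\zeta_1$ is the Reeb section from Theorem \ref{th6.19}; concretely $\dot F_1' = \rho_\bidual^{\nu_1'}(\zeta_1)(F_1')$. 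On the other hand, by the characterization $\{h_1,h_1'\}_{vak}\comp\mu_1 = \{F_{h_1},F_{h_1'}\}_{W_1} = \{-H_{W_1},F_{h_1'}\}_{W_1}$ and the fact that the affine-linear part $\{h_1,\cdot\}_{vak}^{al}$ is precisely the derivation coming from the linear part of this bracket in the fibre-affine variable, one reduces the claim to showing that the Hamiltonian vector field of $-H_{W_1}$ with respect to $\Pi_{W_1}$, restricted appropriately and pushed down to $W_1'$, coincides with $\rho_\bidual^{\nu_1'}(\zeta_1)$.

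The core computational step is to compare two vector fields on (a chart of) $W_1'$. On one side, using the explicit Poisson tensor \eqref{PoissW1} and the local form of $H_{W_1}$, namely $H_{W_1}(x^i,y_0,y_\alpha) = y_0 + y_a\mu^a + y_A\Psi^A(x^i,\mu^a) - \tilde L(x^i,\mu^a)$ evaluated along the constraint $y^a=\mu^a(x^i,y_\alpha)$, I would compute $\{-H_{W_1},\cdot\}_{W_1}$; since $\Pi_{W_1}$ contains the term $\rho_0^i\,\partial_{x^i}\wedge\partial_{y_0}$ and $\partial H_{W_1}/\partial y_0 = 1$, the resulting vector field has $\dot x^i = \rho_0^i + \rho_\alpha^i\,\partial H_{W_1}/\partial y_\alpha$ type components, and one checks using the defining relations $\varphi_a=0$ on $W_1$ (so that $y_a + y_A\partial\Psi^A/\partial y^a - \partial\tilde L/\partial y^a = 0$) that $\partial H_{W_1}/\partial y_a$ contributions collapse, yielding $\dot x^i = \rho_0^i + \Psi^A\rho_A^i + \mu^a\rho_a^i$. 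On the other side, $\rho_\bidual^{\nu_1'}(\zeta_1)$ is read directly off the local expression \eqref{loczeta1} for $\zeta_1$ together with the anchors of $\Y_{01'},\Y_{\alpha 1'},\P_{1'}^\alpha$ from Remark \ref{obs6.18}. Matching the $\dot x^i$ and $\dot y_\alpha$ components of the two vector fields then amounts to recognizing that both reproduce the vakonomic equations \eqref{eqvakafg}, which is exactly what Theorem \ref{th6.19} asserts $\zeta_1$ does.

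The cleaner, more intrinsic route — which I would present as the main argument, relegating coordinates to a verification — is this: the section $h_1$ is the inclusion $W_1'\hookrightarrow W_1$, $\mu_1$ is an affine bundle, and by the general correspondence of Section \ref{sec2.5} the Hamiltonian vector field of $h_1$ with respect to $\{\cdot,\cdot\}_{vak}$ is by definition the linear part of the affine derivation $\{h_1,\cdot\}_{vak}$, hence a vector field on $W_1'$. Its action on $F_1'\in C^\infty(W_1')$ equals $\{F_{h_1},\mu_1^*F_1'\}_{W_1}$ restricted to (and descending along) $W_1'$, i.e.\ $\{-H_{W_1},\cdot\}_{W_1}$ applied to the pullback. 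Because $\{\cdot,\cdot\}_{W_1}$ is the Poisson bracket of the symplectic section $\Omega_{W_1}$ on $\T^\bidual W_1$, and $\Omega_{W_1'}$ is the restriction of $\Omega_{W_0}$ (equivalently of a form built from $\Omega_{W_1}$) to $\T^\bidual W_1'$, the Hamiltonian vector field of $-H_{W_1}$ restricted to $W_1'$ is tangent to $W_1'$ precisely when $(\Omega_{W_1'},\eta_{W_1'})$ is cosymplectic — which holds by regularity — and there it satisfies $i_Z\Omega_{W_1'}=0$, $i_Z\eta_{W_1'}=1$, characterizing it as $\rho_\bidual^{\nu_1'}(\zeta_1)$ by Theorem \ref{th6.19}.

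The main obstacle I anticipate is the bookkeeping in the reduction from $W_1$ to $W_1'$: one must check carefully that the Hamiltonian vector field $X_{-H_{W_1}}$ of $-H_{W_1}$ on $(W_1,\Pi_{W_1})$ is genuinely tangent to the codimension-one submanifold $W_1'=\{H_{W_1}=0\}$ (this uses $X_{W_1}(H_{W_1})=1$ and $X_{W_1}$-invariance of $\Pi_{W_1}$, so that $X_{-H_{W_1}}(H_{W_1}) = \{-H_{W_1},H_{W_1}\}_{W_1}=0$), and that its restriction agrees with the descent of the aff-Poisson Hamiltonian vector field — essentially the same verification already carried out in the proof of Theorem \ref{teor2.2} and in Proposition \ref{prop6.15}. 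Once tangency is secured, identifying the restricted field with the Reeb section $\zeta_1$ is immediate from the cosymplectic characterization, and the coordinate computation above serves only as a cross-check that the components coincide with the vakonomic equations \eqref{eqvakafg}.
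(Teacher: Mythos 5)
Your proposal is correct, and its main argument takes a genuinely different route from the paper's. The paper's proof consists of stating the defining identity (\ref{cHafg}) and then performing (implicitly) a local-coordinate computation: one evaluates $\{F_{h_1},\varphi\comp\mu_1\}_{W_1}$ using the explicit Poisson $2$-vector (\ref{PoissW1}) and the coordinates of Remark \ref{obs6.18}, and compares the result term by term with the local expression (\ref{loczeta1}) of $\zeta_1$ --- exactly your second paragraph. Your preferred intrinsic argument is different and arguably more illuminating: you observe that the relevant Hamiltonian section $Z$ of $\T^\bidual W_1\to W_1$ (the one with $i_Z\Omega_{W_1}=d^{\T^\bidual W_1}H_{W_1}$) satisfies $\rho_\bidual^{\nu_1}(Z)(H_{W_1})=\{H_{W_1},H_{W_1}\}_{W_1}=0$, hence restricts to a section of $\T^\bidual W_1'$, and that there it satisfies $i_Z\Omega_{W_1'}=0$ (because $\Omega_{W_1'}$ coincides with the restriction of $\Omega_{W_1}$ to $\T^\bidual W_1'$, and $d^{\T^\bidual W_1}H_{W_1}$ annihilates sections tangent to $W_1'=\{H_{W_1}=0\}$) and $i_Z\eta_{W_1'}=1$, so that regularity and Theorem \ref{th6.19} force $Z_{|W_1'}=\zeta_1$. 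This buys a coordinate-free identification and makes clear exactly where regularity enters (uniqueness of the Reeb section), at the price of some sign bookkeeping: with the paper's convention $\{F,G\}_{W_1}=\rho_\bidual^{\nu_1}({\mathcal H}_G^{\Omega_{W_1}})(F)$, the vector field $\{h_1,\cdot\}_{vak}^{al}$ is the projection of $\rho_\bidual^{\nu_1}({\mathcal H}_{H_{W_1}}^{\Omega_{W_1}})$ rather than of the Hamiltonian field of $-H_{W_1}=F_{h_1}$, and it is ${\mathcal H}_{H_{W_1}}^{\Omega_{W_1}}$ (not its negative) that pairs to $+1$ with $\eta_{W_1'}$; also, tangency to $W_1'$ holds by antisymmetry alone, independently of the cosymplectic hypothesis, as your final paragraph correctly states (so the ``precisely when'' in your third paragraph overstates the role of regularity there). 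These are convention-level adjustments, not gaps.
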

\begin{proof} We know that the Hamiltonian vector field $\{h_1,\cdot\}_{vak}^{al}$ of $h_1$ with respect to
the vakonomic bracket is given by
\begin{equation}\label{cHafg}
\{h_1,\cdot\}_{vak}^{al}(\varphi)\comp \mu_1=\{
F_{h_1},\varphi\comp\mu_1\}_{W_1},\;\mbox{ for }\;\varphi\in
C^\infty(W_1').
\end{equation}
Then, from (\ref{loczeta1}), (\ref{PoissW1}), (\ref{cHafg}) and
Remark \ref{obs6.18}, we deduce that this vector field is just
$\rho_\bidual^{\nu_1'}(\zeta_1)$ (see (\ref{loczeta1})).
\end{proof}

Next, let $h_1',h_1''$ be two sections of $\mu_1:W_1\to W_1'$ and
suppose that
$$h_1'(x^i,y_\alpha)=(x^i,-H_1'(x^j,y_\beta),y_\alpha)\;\mbox{ and
}\;h_1''(x^i,y_\alpha)=(x^i,-H_1''(x^j,y_\beta),y_\alpha).$$ Then,
using (\ref{PoissW1}), we have that
\begin{equation}\label{corvakloc}
\begin{array}{rcl} \{
h_1',h_1''\}_{vak}&=&\rho_0^i\displaystyle\frac{\partial(H_1'-H_1'')}{\partial
x^i}+\rho_\alpha^i\Big(\frac{\partial H_1'}{\partial
x^i}\frac{\partial H_1''}{\partial y_\alpha}-\frac{\partial
H_1'}{\partial y_\alpha}\frac{\partial H_1''}{\partial x^i}\Big)\\[12pt]
&&-C_{\alpha 0}^\gamma y_\gamma\displaystyle\frac{\partial
(H_1'-H_1'')}{\partial y_\alpha}-C_{\alpha\beta}^\gamma
y_\gamma\frac{\partial H_1'}{\partial y_\alpha}\frac{\partial
H_1''}{\partial y_\beta}.
\end{array}\end{equation}

Since $\A$ is a Lie affgebroid, it follows that $\A^+$ is the
total space of an AV-bundle over $V^*$ with projection
$\mu:\A^+\to V^*$ and, moreover, the linear Poisson structure
$\Pi_{\A^+}$ on $\A^+$ (induced by the Lie algebroid structure of
$\bidual$) defines an aff-Poisson bracket
$\{\cdot,\cdot\}:\Gamma(\mu)\times\Gamma(\mu)\to C^\infty(V^*)$ on
the AV-bundle $\mu:\dual\to V^*$ (see Theorem \ref{teor2.2}).

On the other hand, we may consider the applications
$(\pi_1)_{|W_1}:W_1\kern-1pt\to\dual$ and
$\mu\comp(\pi_{1})_{|W_{1}'}: W_{1}' \to V^{*}$ and it is clear
that $\mu\comp(\pi_1)_{|W_1}=\mu\comp(\pi_1)_{|W_1'}\comp\mu_1$.

%

In fact, using (\ref{PoissW1}) and Remark \ref{obs6.18}, we can
prove the following result.

\begin{corollary}\label{cor6.22}
If $(L, \M)$ is a regular vakonomic system on $\A$, then the pair
$((\pi_1)_{|W_1},\mu\comp(\pi_{1})_{|W_{1}'})$ is a local
aff-Poisson isomorphism of AV-bundles, that is:
\begin{enumerate}
\item[i)] $\mu\comp(\pi_1)_{|W_1'}:W_1'\to V^*$ is a local diffeomorphism;
\item[ii)] The restriction of $(\pi_1)_{|W_1}$ to each fibre of $\mu_1:W_1\to
W_1'$ is an affine isomorphism over the corresponding fibre of
$\mu:\dual\to V^*$ and
\item[iii)] If $h_1',h_1''\in\Gamma(\mu_1)$ and $h',h''\in\Gamma(\mu)$ satisfy that
$$(\pi_1)_{|W_1}\comp h_1'=h'\comp\mu\comp(\pi_1)_{|W_1'},\;\;\;\;(\pi_1)_{|W_1}\comp
h_1''=h''\comp\mu\comp(\pi_1)_{|W_1'},$$ then
$\{h_1',h_1''\}_{vak}=\{h',h''\}\comp\mu\comp(\pi_1)_{|W_1'}.$
\end{enumerate}

\end{corollary}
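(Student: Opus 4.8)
The plan is to verify the three assertions of Corollary~\ref{cor6.22} directly in the local coordinates introduced in Remark~\ref{obs6.18}, exploiting the fact that, by the regularity hypothesis ($\det(\mathcal{R}_{ab})\neq 0$), the functions $\mu^a(x^i,y_\alpha)$ are well-defined and $(x^i,y_\alpha)$ are legitimate coordinates on both $W_1$ (together with $y_0$) and $W_1'$. First I would recall the local expressions $\mu(x^i,y_0,y_\alpha)=(x^i,y_\alpha)$ for the AV-bundle projection $\mu:\dual\to V^*$ and $\mu_1(x^i,y_0,y_\alpha)=(x^i,y_\alpha)$ for $\mu_1:W_1\to W_1'$, together with the local form of $(\pi_1)_{|W_1}$, which sends $(x^i,y_0,y_\alpha)\in W_1$ to the point of $\dual$ with coordinates $(x^i,y_0,y_\alpha)$ (since $\pi_1$ is a restriction of $\pr_1$, its coordinate expression just forgets the $y^a$-directions, which on $W_1$ are slaved to $\mu^a$). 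From these, assertion (i) is immediate: in coordinates $\mu\comp(\pi_1)_{|W_1'}$ is the identity map $(x^i,y_\alpha)\mapsto(x^i,y_\alpha)$, hence a local diffeomorphism. For assertion (ii), note that a fibre of $\mu_1$ over $(x^i,y_\alpha)$ consists of the points $(x^i,y_0,y_\alpha)$ with $y_0$ free, and $(\pi_1)_{|W_1}$ maps this affine line to the fibre of $\mu:\dual\to V^*$ over $(x^i,y_\alpha)$, again parametrized by $y_0$; since the action of $\R$ is $y_0\mapsto y_0+t$ on both sides (recall $X_{W_1}=\partial/\partial y_0$ and the analogous statement for $\dual$ from the proof of Theorem~\ref{teor2.2}), this restriction is an affine isomorphism intertwining the two $\R$-actions.

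The heart of the proof is assertion (iii), the aff-Poisson compatibility. The plan is to translate the section-level identities $(\pi_1)_{|W_1}\comp h_1'=h'\comp\mu\comp(\pi_1)_{|W_1'}$ (and the same with double primes) into relations between the generating Hamiltonians. Writing $h_1'(x^i,y_\alpha)=(x^i,-H_1'(x^j,y_\beta),y_\alpha)$ and $h'(x^i,y_\alpha)=(x^i,-H'(x^j,y_\beta),y_\alpha)$, and using the coordinate expressions above together with the fact that $\mu\comp(\pi_1)_{|W_1'}$ is the identity in coordinates, the compatibility condition becomes simply $H_1'=H'$ and $H_1''=H''$ as functions of $(x^j,y_\beta)$. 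Then I would compute $\{h_1',h_1''\}_{vak}$ using formula~(\ref{corvakloc}) — which was derived precisely from the invariant Poisson bivector $\Pi_{W_1}$ of~(\ref{PoissW1}) — and observe that the resulting expression in terms of $H_1'=H'$, $H_1''=H''$ and the structure functions $\rho_0^i,\rho_\alpha^i,C_{\alpha 0}^\gamma,C_{\alpha\beta}^\gamma$ is term-by-term identical to the local expression of the aff-Poisson bracket $\{h',h''\}$ on $\mu:\dual\to V^*$ coming from the linear Poisson structure $\Pi_{\dual}$ on $\dual$. This last comparison is where the bulk of the (routine but necessary) bookkeeping lives: one must check that the Poisson bivector $\Pi_{W_1}$ in~(\ref{PoissW1}) is literally the pullback/restriction of $\Pi_{\dual}$ under the coordinate identification, which holds because both are built from the same Lie algebroid structure on $\bidual$ via the same prolongation-and-morphism recipe — indeed $\Omega_{W_1}=(\T(\pi_1)_{|W_1},(\pi_1)_{|W_1})^*\Omega_\bidual$ by construction.

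The main obstacle I anticipate is not conceptual but notational: one must be careful that the coordinate system $(x^i,y_0,y_\alpha)$ on $W_1$ genuinely serves all roles simultaneously — it must be compatible with the constraint equations $y^a=\mu^a(x^i,y_\alpha)$ defining $W_1$ inside $W_0$, with the identification of the fibre coordinate $y_0$ as the affine/principal coordinate for the $\R$-action, and with the local basis $\{\Y_{01},\Y_{\alpha 1},\P^0_1,\P^\alpha_1\}$ of $\T^\bidual W_1$ appearing in the proof of Theorem~\ref{th6.21}. In particular the terms involving $\partial\mu^a/\partial x^i$ and $\partial\mu^a/\partial y_\alpha$ in that local basis must be tracked to confirm they drop out of $\Omega_{W_1}$ (as indeed the displayed formula for $\Omega_{W_1}$ in the proof of Theorem~\ref{th6.21} shows) so that $\Pi_{W_1}$ really does coincide with the standard linear structure. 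Once this identification is pinned down, all three statements follow by inspection, and the corollary is established.
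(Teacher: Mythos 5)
Your proposal is correct and follows essentially the same route as the paper, which proves the corollary precisely by combining the local coordinate description of Remark \ref{obs6.18} (so that $\mu$, $\mu_1$ and $(\pi_1)_{|W_1}$ all become coordinate projections and $\mu\comp(\pi_1)_{|W_1'}$ the identity in the coordinates $(x^i,y_\alpha)$) with the observation that the bivector $\Pi_{W_1}$ of (\ref{PoissW1}) coincides with the linear Poisson bivector of $\dual$, so that (\ref{corvakloc}) is term-by-term the aff-Poisson bracket of $\mu:\dual\to V^*$. Your write-up simply supplies the bookkeeping the paper leaves implicit; no gap.
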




\begin{remark}{\rm If $\mu\comp(\pi_1)_{|W_1'}:W_1'\to V^*$ is a
global diffeomorphism then we can define the section
$h\in\Gamma(\mu)$ given by
$$h=(\pi_1)_{|W_1}\comp h_1\comp(\mu\comp(\pi_1)_{|W_1'})^{-1}$$
and it is clear that $(\pi_1)_{|W_1}\comp h_1=h\comp
\mu\comp(\pi_1)_{|W_1'}$. This implies that the Hamiltonian vector
fields of $h_1$ and $h$ are
$((\pi_1)_{|W_1},\mu\comp(\pi_1)_{|W_1'})$-related. Therefore, if
$\gamma_1':I\to W_1'$ is a solution of the vakonomic equations for
the system $(L,\M)$, then
$\mu\comp(\pi_1)_{|W_1'}\comp\gamma_1':I\to V^*$ is a solution of
the Hamilton equations for $h$. Conversely, if $\gamma:I\to V^*$
is a solution of the Hamilton equations for $h$ then
$(\mu\comp(\pi_1)_{|W_1'})^{-1}\comp\gamma:I\to W_1'$ is a
solution of the vakonomic equations for the system $(L,\M)$.
}\end{remark}

\begin{remark}{\rm If $(L, {\mathcal M})$ is a vakonomic system on
a Lie affgebroid ${\mathcal A}$ which is not regular then one may
apply a constraint algorithm in order to obtain solutions of the
vakonomic equations in a suitable Lie subalgebroid of ${\mathcal
T}^{\widetilde{\mathcal A}}W_{1} \to W_{1}$.  }
\end{remark}

\subsection{Mechanical systems subject to affine
constraints on Lie affgebroids}

The Lagrangian function $L: {\mathcal A} \to \R$ of a mechanical
system on a Lie affgebroid $\tau_{\mathcal A}: {\mathcal A} \to Q$
is of the form
$$L(\a)=\displaystyle\frac{1}{2}{\mathcal
G}_{\tau_\A(\a)}(i_\A(\a),i_\A(\a))-{\mathcal
V}(\tau_\A(\a)),\;\mbox{ for all }\a\in\A,$$ where ${\mathcal G}$
is a bundle metric on $\bidual$ and $\V$ a function on $Q$. We
also denote by ${\mathcal G}$ the bundle metric induced on $\dual$
and we suppose that the 1-cocycle $1_\A$ has constant norm equal
to 1. Moreover, using the metric, we can identify $\A$ and $V$. In
fact, we have the affine bundle morphism ${\mathcal I}:\A\to V$
given by
$${\mathcal I}(\a)=i_\A(\a)-1_\A^{\mathcal G}(\tau_\A(\a)),\;\mbox{
for all }\;\a\in\A,$$
where $1_\A^{\mathcal G}\in\Gamma(\tau_\bidual)$ is defined by
$\varphi(1_\A^{\mathcal G}(x))={\mathcal G}_x(1_\A(x),\varphi)$,
for all $\varphi\in\A^+_x$. Then, if $\bar{\mathcal G}$ is the
restriction to $V$ of ${\mathcal G}$, the Lagrangian $L$ may be
written as follows
$$L(\a)=\displaystyle\frac{1}{2}\bar{\mathcal
G}_{\tau_\A(\a)}({\mathcal I}(\a),{\mathcal I}(\a))-\bar{\mathcal
V}(\tau_\A(\a)),\;\mbox{ for all }\a\in\A,$$ where $\bar{\mathcal
V}(x)={\mathcal V}(x)-\frac{1}{2}$, for all $x\in Q$.

Now, let $(x^i)$ be local coordinates on an open subset $U$ of
$Q$. Then, since the section $1_{\mathcal A}$ has constant norm
equal to $1$, we can consider an orthonormal basis of sections of
the vector bundle $\tau_{{\mathcal A}^{+}}^{-1}(U) \to U$ of the
form $\{e^0 = 1_{\mathcal A}, e^{\alpha}\}$. Thus, its dual basis
$\{e_{0}, e_{\alpha}\}$ is an orthonormal local basis of sections
of $\widetilde{\mathcal A}$. Moreover, if $(x^i,y^0,y^\alpha)$ are
the corresponding local coordinates on $\bidual$, the local
expression of the Lagrangian function is
$$L(x^i,y^\alpha)=\displaystyle\frac{1}{2}(y^\alpha)^2-\bar{\mathcal
V}(x^i)$$ and the Euler-Lagrange equations (that is, the vakonomic
equations for the system $(L, {\mathcal A})$) reduce to
$$\displaystyle\frac{dx^i}{dt}=\rho_0^i+\rho^i_\alpha
y^\alpha,\;\;\;\frac{dy^\alpha}{dt}=-\rho_\alpha^i\frac{\partial{\mathcal
V}}{\partial x^i}-(C_{\alpha 0}^\gamma+C_{\alpha\beta}^\gamma
y^\beta)y^\gamma.$$

Next, suppose that the constraint submanifold $\M$ of the
vakonomic system is an affine subbundle $\B$ of ${\mathcal A}$,
that is, we have an affine bundle $\B$ over $Q$ with associated
vector bundle $\tau_{U_\B}:U_\B\to Q$ and the corresponding
inclusions $i_\B:\B\to\A$ and $i_{U_\B}:U_\B\to V$. Furthermore,
assume that $1_{\mathcal A}^{\mathcal G} \in \Gamma(\tau_{B})$.
Then, we can choose an special coordinate system adapted to the
structure of the problem as follows. In fact, we consider local
coordinates $(x^i)$ on an open subset $U$ of $Q$ and an
orthonormal local basis of $\Gamma(\tau_{V})$, $\{e_{A}, e_{a}\}$,
adapted to the decomposition $V=U_\B^{\perp,{\bar{\mathcal
G}}}\oplus U_\B$, $U_\B^{\perp,{\bar{\mathcal G}}}$ being the
orthogonal subbundle to $U_{\B}$ with respect to the bundle metric
$\bar{\mathcal G}$. Thus, we deduce that $\{1_{\mathcal
A}^{\mathcal G}= e_{0}, e_{A}, e_{a}\}$ is an orthonormal local
basis of $\Gamma(\tau_{\widetilde{\mathcal A}})$ adapted to the
affine subbundle $\B$. Denote by $(x^i, y^0, y^A, y^a)$ the
corresponding local coordinates on $\widetilde{\mathcal A}$ and by
$(x^i, y_0, y_A, y_a)$ the dual local coordinates on ${\mathcal
A}^+$. Note that the equations which define to $\B$ as an affine
subbundle of ${\mathcal A}$ are $y^A = 0$. Therefore, the
vakonomic system $(L, \B)$ is regular and the local expression of
the vakonomic equations is
$$\left\{ \begin{array}{l}\displaystyle\frac{dx^i}{dt}=\rho_0^i+\rho_a^iy^a,\\[8pt]
\displaystyle\frac{dy_\alpha}{dt}=-\rho_\alpha^i\frac{\partial{\mathcal
V} }{\partial x^i}-(C_{\alpha0}^\gamma+ C_{\alpha a}^\gamma y_a)y_\gamma,\\[6pt]
y^0=1,\;y^A=0,\; y^a = y_a,\; y_{0} = \displaystyle \frac{1}{2}
(y_a)^2 + {\mathcal V}(x^i) . \end{array}\right. $$


\subsection{The variational point of view}

Let $\tau_\A :\A\to Q$ be a Lie affgebroid modelled on the Lie
algebroid $\tau_V :V\to Q$ and $L:\A\to \R$ be a Lagrangian
function on $\A$. Next, we will show how to obtain the
Euler-Lagrange equations on the Lie affgebroid $\A$ from a
variational point of view.

We define the set of $\A$-paths as follows  $${\mathcal
Adm}([t_0,t_1],\A)=\{ \a:[t_0,t_1]\to \A \, | \, \rho_\A\comp
\a=\displaystyle\frac{d}{dt}(\tau_\A\comp \a) \},$$ that is, as
the set of admissible curves in $\A$. Then, for two fixed points
$x,y\in Q$, denote by ${\mathcal Adm}([t_0,t_1],\A)_{x}^{y}$ the
set of $\A$-paths with fixed base
endpoints equal to $x$ and $y$. 

Now, if $i_V:V\to\bidual$ is the canonical inclusion, we consider
as infinitesimal variations the complete lifts of sections of
$\tau_V:V\to Q$ which vanish at the points $x$ and $y$, that is,
\[ \{(i_V\comp \bar{X})^{c}_{|\A}\, |
\,\bar{X}\in\Gamma(\tau_V),\,\bar{X}(x)=0 \mbox{ and }
\bar{X}(y)=0\}.
\]
Note that if $\{e_0,e_\alpha\}$ is a local basis of
$\Gamma(\tau_\bidual)$ and $\bar{X}\in\Gamma(\tau_V)$ is locally
given by $\bar{X}=\bar{X}^\alpha e_\alpha$, then $(i_V\comp
\bar{X})^{c}_{|\A}$ is the vector field on $\A$ given by
$$(i_V\comp
\bar{X})^{c}_{|\A}=\bar{X}^c_i\displaystyle\frac{\partial}{\partial
x^i}+\bar{X}^c_\alpha\frac{\partial}{\partial y^\alpha},$$ where
$\bar{X}^c_i=\bar{X}^\alpha\rho_\alpha^i,\;\;\bar{X}^c_\alpha=\frac{\partial\bar{X}^\alpha}{\partial
x^i}(\rho_0^i+y^\beta\rho_\beta^i)-\bar{X}^\gamma(C_{\gamma
0}^\alpha+y^\beta C_{\gamma\beta}^\alpha),\,\mbox{ for all
}\,i\,\mbox{ and }\,\alpha.$

On the other hand, we introduce the action functional $\delta
S:{\mathcal Adm}([t_0,t_1],\A)\to \R$ defined by
\[
\delta S(\a)=\int_{t_0}^{t_1} L(\a(t))dt.
\]
With this definition it is not difficult to prove that the
critical points of $\delta S$ on ${\mathcal
Adm}([t_0,t_1],\A)_{x}^{y}$ are the curves $\a \in {\mathcal
Adm}([t_0,t_1],\A)_x^y$ which satisfy the Euler-Lagrange equations
(that is, the vakonomic equations for the system $(L, {\mathcal
M})$, with ${\mathcal M} = {\mathcal A}$).

Now, let $(L, \M)$ be a vakonomic system on the Lie affgebroid
$\tau_\A: \A \to Q$. Denote by ${\mathcal Adm}([t_0,t_1],\M)_x^y$
the set of $\A$-paths on $\M$ with fixed base endpoints equal to
$x$ and $y$, respectively, that is,
\[
{\mathcal Adm}([t_0,t_1], \M)_x^y =\big \{ \a\in {\mathcal
Adm}([t_0,t_1],\A)_x^y \,|\, \a(t)\in \M, \,\forall\, t\in
[t_0,t_1] \big \}.
\]
In this case, we are going to consider infinitesimal variations
(that is, complete lifts $(i_V\comp\bar{X})^{c}_{|\A}$, with
$\bar{X}\in\Gamma(\tau_V)$) tangent to the constraint submanifold
$\M$ and we assume that there exist enough infinitesimal
variations of this kind (that is, we are studying the so-called
normal solutions of the vakonomic problem). If $\M$ is locally
given by the equations $y^A - \Psi ^A (x^i, y^a)=0$, for
$A=1,\dots,\bar{m}$, we deduce that the infinitesimal variations
must satisfy
\[
 (i_V\comp\bar{X})^{c}_{|\A} (y^A - \Psi ^A (x^i, y^a))=0,\quad \bar{X}(x)=0,\quad
 \bar{X}(y)=0.
\]
Note that if $\a\in {\mathcal Adm}([t_0,t_1],\M)_x^y$ then
\[
(i_V\comp\bar{X})^{c}_{|\A} (y^A - \Psi ^A (x^i, y^a))\comp \a=0
\]
if and only if
\begin{equation}\label{varinfafg}\begin{array}{rcl}
\displaystyle\frac{d\bar{X}^A}{dt}&=&\rho^i_\alpha\bar{X}^\alpha\displaystyle\frac{\partial
\Psi ^A}{\partial x^i}+\frac{d\bar{X}^a}{dt}\frac{\partial \Psi
^A}{\partial y^a}-(C_{\gamma
0}^a+{C}^a_{\gamma\beta}y^\beta)\bar{X} ^\gamma\frac{\partial \Psi
^A}{\partial y^a}\\[10pt]&&+(C_{\gamma
0}^A+{C}^A_{\gamma\beta}y^\beta)\bar{X}^\gamma.
\end{array}\end{equation}
Thus, if we consider our infinitesimal variations then
\[
\begin{array}{rcl}
\displaystyle \frac{d}{ds}_{|s=0} \int_{t_0}^{t_1} \kern-5ptL
(\a_s(t))dt &\kern-5pt=&\kern-7pt\displaystyle
\int_{t_0}^{t_1}\kern-5pt\Big( \frac{\partial L}{\partial
x^i}\bar{X}^{c}_i + \frac{\partial L}{\partial y^A}\frac{\partial
\Psi ^A}{\partial x^i} \bar{X}^{c}_i  + \frac{\partial L}{\partial
y^a}\bar{X}^{c}_a+ \displaystyle\frac{\partial L}{\partial
y^A}\frac{\partial
\Psi ^A}{\partial y^a} \bar{X}^{c}_a \Big) dt\\[14pt]

 &\kern-5pt=&
\kern-7pt\displaystyle \int_{t_0}^{t_1} \Big(\frac{\partial
\tilde{L}}{\partial x^i}\rho ^i_A \bar{X} ^A + \frac{\partial
\tilde{L}}{\partial x^i}\rho ^i_a \bar{X} ^a + \frac{\partial
\tilde{L}}{\partial y^a}\bar{X}^{c}_a \Big)dt
\end{array}
\]
Now, let $y_A$ be the solution of the differential equations
\[
\dot{y}_{A}=\Big ( \frac{\partial \tilde{L}}{\partial x^i}-y_B
\frac{\partial \Psi ^B}{\partial x^i} \Big ) \rho ^i_{A}
-y_\gamma(C_{A0}^\gamma+\Psi^B{C}^\gamma_{A B}+
y^a{C}^\gamma_{Aa}),
\]
where
\begin{equation}\label{momenta-eq-afg}
y_a=\frac{\partial \tilde{L}}{\partial y^a}-y_A \frac{\partial
\Psi ^A}{\partial y^a}.
\end{equation}
Then, from (\ref{varinfafg}), we have that
\[
\begin{array}{l}
\displaystyle \frac{d}{dt}( y_A \bar{X}^A)= \displaystyle
\dot{y}_A \bar{X}^A + y_A
\dot{\bar{X}}^A=\displaystyle\frac{\partial \tilde{L}}{\partial
x^i}\rho ^i_{A}\bar{X}^A +\displaystyle y_A
\rho^i_a\bar{X}^a\frac{\partial \Psi ^A}{\partial
x^i}+y_A\frac{d\bar{X}^a}{dt}\frac{\partial \Psi
^A}{\partial y^a}\\[10pt]\hfill-y_A(C_{\gamma
0}^a+\Psi^B C^a_{\gamma B} +y^b{C}^a_{\gamma
b})\bar{X}^\gamma\displaystyle\frac{\partial \Psi ^A}{\partial
y^a} +y_A(C_{b0}^A+\Psi^B C_{bB}^A\\[10pt]
\hfill+y^c C_{b c}^A)\bar{X}^b -
y_b(C_{A0}^b+\Psi^BC_{AB}^b+y^aC_{Aa}^b) \bar{X} ^A.
\end{array}
\]
Using this equality, we deduce that
\[
\begin{array}{l}
\displaystyle \frac{d}{ds}_{|s=0} \int_{t_0}^{t_1} L (\a_s(t))dt
=\displaystyle
 \int_{t_0}^{t_1} \Big (
 \frac{d}{dt}(y_A\bar{X}^A)+\frac{\partial\tilde{L}}{\partial x^i}\rho_a^i\bar{X}^a+
 \frac{\partial\tilde{L}}{\partial y^a}\bar{X}^c_a\\[10pt]
\hfill-y_A\rho_a^i\bar{X}^a\displaystyle\frac{\partial\Psi^A}{\partial
 x^i}-y_A\frac{d\bar{X}^a}{dt}\frac{\partial\Psi^A}{\partial
 y^a}+y_A(C_{\gamma 0}^a+\Psi^B C_{\gamma B}^a+y^b C_{\gamma
 b}^a)\bar{X}^\gamma\displaystyle\frac{\partial\Psi^A}{\partial
 y^a}\\[10pt]
\hfill
+y_b(C_{A0}^b+\Psi^BC_{AB}^b+y^aC_{Aa}^b)\bar{X}^A-y_A(C_{b0}^A+\Psi^BC_{bB}^A+y^cC_{bc}^A)\bar{X}^b
\Big)dt .
\end{array}
\]
Finally, using (\ref{momenta-eq-afg}) and the fact that
\[
\bar{X}^{c}_a=\frac{d\bar{X} ^a}{dt}- (C_{\gamma
0}^a+\Psi^AC_{\gamma A}^a+y^bC_{\gamma b}^a )\bar{X}^\gamma,
\]
we obtain that
\[
\begin{array}{rcl}
\displaystyle \frac{d}{ds}_{|s=0} \int_{t_0}^{t_1} L (\a_s(t))dt
&=& \displaystyle
 \int_{t_0}^{t_1} \Big [
\Big ( \frac{\partial \tilde{L}}{\partial x^i}-y_A \frac{\partial
\Psi ^A}{\partial x^i} \Big )\rho ^i_{a} -\frac{d}{dt} \Big (
\frac{\partial \tilde{L}}{\partial y^a} -y_A \frac{\partial \Psi
^A}{\partial y^a} \Big )\\[12pt]&& -y_\gamma(C_{a0}^\gamma+\Psi^BC_{aB}^\gamma+y^b C_{ab}^\gamma) \Big]\bar{X}^a dt .
\end{array}
\]
Since the variations $\bar{X}^a$ are free, we conclude that the
equations are

$$\left \{
\begin{array}{l}
\displaystyle \dot{x}^i=\rho_0^i+\Psi ^A \rho ^i_A+y^a \rho ^i_a ,\\[5pt]
\displaystyle \dot{y}_{A}=\Big ( \frac{\partial
\tilde{L}}{\partial x^i}-y_B \frac{\partial \Psi ^B}{\partial x^i}
\Big ) \rho ^i_A -y_\gamma(C^{\gamma}_{A0}+\Psi ^B C^\gamma_{AB}+y^aC^\gamma_{Aa}) ,\\[10pt]
\displaystyle \frac{d}{dt}\left( \frac{\partial
\tilde{L}}{\partial y^a}-y_A \frac{\partial \Psi ^A}{\partial y^a}
\right)= \Big ( \frac{\partial \tilde{L}}{\partial x^i}-y_A
\frac{\partial \Psi ^A}{\partial x^i} \Big ) \rho ^i_{a}\\[12pt]
\hspace{4.5cm}-y_\gamma( C^{\gamma}_{a0}+\Psi ^B
C^\gamma_{aB}+y^bC^\gamma_{ab}),
\end{array}
\right .$$ for all $1\leq i\leq m$, $1\leq A\leq\bar{m}$ and
$\bar{m}+1\leq a\leq n$, with $y_a=\displaystyle\frac{\partial
\tilde{L}}{\partial y^a}-y_A \frac{\partial \Psi^A}{\partial
y^a}$, that is, the vakonomic equations for the system $(L,\M)$ on
$\tau_\A :\A\to Q$ (see (\ref{eqvakafg})).

\section{Examples}

\subsection{Skinner-Rusk formalism on Lie affgebroids}
Consider on a Lie affgebroid $\tau_{\A}:\A\rightarrow Q$ a
vakonomic system $(L,\M)$ with $\M=\A$, that is, a free system. In
this case, $W_0=\dual\oplus_Q \A$ and the Pontryagin Hamiltonian
$H_{W_0}:\dual\oplus_Q \A\to \R$ is defined by
$H_{W_0}(\varphi,\a)=\varphi(\a)-L(\a)$. Moreover, the
precosymplectic structure $(\Omega _{W_0},\eta)$ on $\T^\bidual
W_0$ is given by
$$\Omega _{W_0}=({\mathcal T}\mathrm{pr}_1, \mathrm{pr}_1)^*\Omega
_\bidual+d^{\T^\bidual W_0}H_{W_0}\wedge\eta\;\mbox{ and
}\;\eta=(\tilde{\mathrm{pr}}_1,\nu)^*1_\A,$$ where
$\mathrm{pr}_1:\dual\oplus_Q\A\to\dual$ is the canonical
projection on the first factor,
$\T\mathrm{pr}_1:\T^\bidual(\dual\oplus_Q\A)\to\T^\bidual\dual$ is
its prolongation and
$\tilde{\mathrm{pr}}_1:\T^\bidual(\dual\oplus_Q\A)\to\bidual$ is
the restriction of the projection $\bidual\times
T(\dual\oplus_Q\A)\to\bidual$ (on the first factor) to the
prolongation $\T^\bidual(\dual\oplus_Q\A)$. In local coordinates,
we have that
$$H_{W_0}(x^i,y_0,y_\alpha,y^\alpha)=y_0+y_\alpha
y^\alpha-L(x^i,y^\alpha),$$
$$\begin{array}{rcl}
\Omega_{W_0}&=&\Y^\alpha\wedge {\mathcal
P}_\alpha+\Big(C_{\alpha0}^\gamma
y_\gamma-\rho_\alpha^i\displaystyle\frac{\partial {L}}{\partial
x^i}\Big)\Y^\alpha\wedge\Y^0+ \displaystyle\frac{1}{2}
{C}_{\alpha\beta}^\gamma y_\gamma \Y^\alpha\wedge
\Y^\beta\\[12pt]&&+\;y^\alpha
\P_\alpha\wedge\Y^0 + \Big(y_\alpha-\displaystyle\frac{\partial
L}{\partial y^\alpha}\Big)\V^\alpha\wedge\Y^0\end{array}$$ and
$$\eta=\Y^0.$$ Then, the submanifold
$W'_1\subseteq\dual\oplus_Q\A$ is locally characterized by
$$y_{0} = L(x^i, y_{\alpha}) - y_{\alpha}y^{\alpha}, \; \; \; \;
y_\alpha-\displaystyle\frac{\partial L}{\partial y^\alpha}=0$$
and the vakonomic equations reduce to
$$\left \{\begin{array}{rcl}
\dot{x}^i&=&\rho_0^i+y^\alpha \rho_\alpha^i,\\[6pt]
\displaystyle\frac{d}{dt}\Big(\frac{\partial L}{\partial
y^\alpha}\Big)&=&\rho_{\alpha}^i\displaystyle\frac{\partial
L}{\partial x^i}-(C_{\alpha 0
}^{\gamma}+C_{\alpha\beta}^{\gamma}y^{\beta})\displaystyle\frac{\partial
L}{\partial y^{\gamma}}. \end{array}\right .$$

Thus, if $(\pr_2)_{|W_1}:W_1\to\A$ is the restriction to $W_1$ of
the canonical projection on the second factor and $\gamma_1:I\to
W_1$ is a solution of the vakonomic equations, then
$(\pr_2)_{|W_1}\comp\gamma_1$ is a solution of the Euler-Lagrange
equations for $L$.

Note that in the standard case, that is, if $\A=J^1\tau$, this
procedure is the Skinner-Rusk formulation for time-dependent
mechanics (see \cite{BLEE,CoMaC}).


\subsection{The 1-jet bundle of local sections of a fibration}
Let $\tau:Q\to\R$ be a fibration and $\tau_{1,0}:J^1\tau\to Q$ be
the associated Lie affgebroid modelled on the vector bundle
$\pi=(\pi_Q)_{|V\tau}:V\tau\to Q$ (see Section \ref{secaff}). If
$(t,q^i)$ are local fibred coordinates on $Q$ then $\{
\frac{\partial}{\partial t}, \frac{\partial}{\partial q^i} \}$ is
a local basis of sections of $\pi_Q:TQ\to Q$. Denote by
$(t,q^i,\dot{t},\dot{q}^i)$ the corresponding local coordinates on
$TQ$. Then,  the (local) structure functions of $TQ$ with respect
to this local trivialization are given by
\begin{equation}\label{str-const-trivial}
\begin{array}{l}
C^k_{ij}=0 \mbox{ and } \rho ^i_j=\delta _{ij}, \mbox{ for }i,j, k
\in \{ 0,1,\ldots ,n\}.
\end{array}
\end{equation}
Moreover, $(t,q^i,\dot{q}^i)$ are the corresponding local
coordinates on $J^1\tau$.

Now, let $\M\subseteq J^1\tau$ be a constraint submanifold such
that $\tau_{1,0|\M}:\M\to Q$ is a surjective submersion and
$L:J^1\tau\to\R$ be a Lagrangian function. Suppose that the
constraint submanifold $\M$ is locally defined by the equations
$\dot{q}^A=\Psi^A(t,q^i,\dot{q}^a),$ where we use the following
notation $(t,q^i,\dot{q}^i)=(t,q^i,\dot{q}^A,\dot{q}^a)$.

Thus, if we apply the results of the Section \ref{sec4.1} to this
particular case, we recover some results obtained in \cite{BLEE}.
In particular, using (\ref{eqvakafg}) and
(\ref{str-const-trivial}), it follows that the vakonomic equations
reduce to
\[
\left \{
\begin{array}{l}\displaystyle \dot{p}_A =\frac{\partial \tilde{L}}{\partial
q^A }-p_B \frac{\partial \Psi ^B}{\partial
q^A},\\[10pt] \displaystyle  \frac{d}{dt}\Big ( \frac{\partial
\tilde{L}}{\partial
\dot{q}^a}-p_A\frac{\partial\Psi^A}{\partial\dot{q}^a}\Big )=
\frac{\partial \tilde{L}}{\partial q^a}- p_A \frac{\partial \Psi
^A}{\partial {q}^a},\\[10pt]
\displaystyle \dot{q}^A =\Psi^A(t,q^i,\dot{q}^a),
\end{array}
\right .
\]
where $(t,q^i,p,p_i)$ are the local coordinates on $T^*Q$ induced
by the local basis $\{dt,$ $dq^i\}$. If these equations are
written using the Lagrange multipliers (see (\ref{eqvakafg2}))
then they coincide with the equations obtained in \cite{ViB}.

On the other hand, if $(L,\M)$ is a regular vakonomic system on
$\tau_{1,0}:J^1\tau\to Q$, then the AV-bundle $\mu_1:W_1\to W_1'$
is locally defined by $\mu_1(t,q^i,p,p_i)=(t,q^i,p_i)$. Thus, if
$h_1',h_1'':W_1'\to W_1$ are two sections of $\mu_1:W_1\to W_1'$,
$$h_1'(t,q^i,p_i)=(t,q^i,-H_1'(t,q^j,p_j),p_i)\mbox{ and }h_1''(t,q^i,p_i)
=(t,q^i,-H_1''(t,q^j,p_j),p_i),$$
then, from (\ref{corvakloc}), we deduce that the vakonomic bracket
$\{\cdot , \cdot \}_{vak}:\Gamma(\mu_1) \times \Gamma(\mu_1) \to
C^\infty(W_1')$ is locally given by
$$\{h_1',h_1''\}_{vak}=\displaystyle\frac{\partial(H_1'-H_1'')}{\partial
t}+\frac{\partial H_1'}{\partial q^i}\frac{\partial
H_1''}{\partial p_i}-\frac{\partial H_1'}{\partial
p_i}\frac{\partial H_1''}{\partial q^i}.$$

\subsection{Optimal control systems as vakonomic systems on Lie affgebroids} Let
$\tau_\A:\A\to Q$ be a Lie affgebroid and $C$ a fibred manifold
over the state manifold $\pi :C\to Q$. We also consider a section
$\sigma :C\to \A$ along $\pi$ and an index function $l:C\to \R$.
The triple $(l, \pi, \sigma)$ is an optimal control system on the
Lie affgebroid ${\mathcal A}$.

One important case happens when the section $\sigma :C\to \A$
along $\pi$ is an embedding. In such a case, we have that the
image $\M=\sigma(C)$ is a submanifold of $\A$. Moreover, since
$\sigma: C\to \M$ is a diffeomorphism, we can define a Lagrangian
function $L: \M\to \R$ by $L=l\comp \sigma^{-1}$. Therefore, it is
equivalent to analyze the optimal control problem defined by $(l,
\pi, \sigma)$ (applying the Pontryagin maximum principle) that to
study the vakonomic problem on the Lie affgebroid $\tau_\A :\A\to
Q$ defined by $(L, \M)$.

In the general case (when $\sigma: C \to {\mathcal A}$ is not, in
general, an embedding), we consider the subset $\J^\A C$ of the
product manifold $\A\times TC$ defined by
$$\J^\A C=\{(\a,v)\in \A\times TC
\,|\,\rho_\A(\a)=(T\pi)(v)\}.$$ Next, we will show that $\J^\A C$
admits a Lie affgebroid structure. Let $\tau_\bidual ^\pi
:{\mathcal T}^\bidual C\to C$ be the prolongation of the bidual
Lie algebroid $\bidual$ of $\A$ over the fibration $\pi :C\to Q$.

On the other hand, let $\phi:\T^\bidual C\to\R$ be the section of
$(\tau_\bidual^\pi)^*:(\T^\bidual C)^*\to C$ defined by
$$\phi(\tilde{\a},X_p)=1_\A(\tilde{\a}),\,\mbox{ for }\,(\tilde{\a},X_p)\in\T^\bidual_p C,\,\mbox{ with }\,p\in C.$$
Note that if $\pr_1:{\mathcal  T}^{\widetilde{\A}}C\to
\widetilde{\A}$ is the canonical projection on the first factor
then $(\pr_1,\pi)$ is a morphism between the Lie algebroids
$\tau_{\widetilde{\A}}^{\pi}:{\mathcal
T}^{\widetilde{\A}}C\rightarrow C$ and
$\tau_{\widetilde{\A}}:\widetilde{\A}\to Q$ and, moreover, we have
that $(\pr_1,\pi)^*(1_{\A})=\phi.$ Since $1_{\A}$ is a $1$-cocycle
of $\tau_{\widetilde{\A}}:\widetilde{\A}\rightarrow Q$, we deduce
that $\phi$ is a $1$-cocycle of the Lie algebroid
$\tau_{\widetilde{\A}}^{\pi}:{\mathcal
T}^{\widetilde{\A}}C\rightarrow C$ and, using the fact that
$(1_\A)_{|\bidual_x}\neq 0$, for all $x\in Q$, we have that
$\phi_{|\T^\bidual_p C}\neq 0$, for all $p\in C$.

In addition, it follows that
$$\phi^{-1}\{1\}=\{(\tilde{\a},X_p)\in \T^\bidual_p C
\,|\,1_\A(\tilde{\a})=1\}=\J^\A C.$$

On the other hand, let $\tau_V^\pi:\T^V C\to C$ be the
prolongation of the Lie algebroid $(V,\lcf\cdot,\cdot\rcf_V,$
$\rho_V)$ over the fibration $\pi:C\to Q$. Then, it is easy to
prove that $\phi^{-1}\{0\}=\T^VC.$ Therefore, we conclude that
$\J^\A C$ is an affine bundle over $C$ with projection
$\tau_\A^\pi:\J^\A C\to C$ defined by $\tau_\A^\pi(a,v)=\pi_C(v),$
where $\pi_C:TC\to C$ is the canonical projection. Moreover, the
affine bundle $\tau_\A^\pi:\J^\A C\to C$ admits a Lie affgebroid
structure such that its bidual Lie algebroid is just $(\T^\bidual
C,\lcf\cdot,\cdot\rcf_\bidual^\pi,\rho_\bidual^\pi)$ and it is
modelled on the Lie algebroid $\tau_V^\pi:\T^V C\to C$ (see
Section \ref{secaff}).

Thus, we can consider the constraint submanifold $\M$ of the Lie
affgebroid $\J^\A C$ defined by
\[ \M =\bigcup_{p\in C} \{ (\a,X_p) \in {\mathcal J}^\A_pC \, | \, \sigma (p)=\a \}
\]
and the Lagrangian function $L :{\mathcal J}^\A C \to \R$ given by
$L = l\comp \tau_\A ^\pi$. Then, $(L,\M)$ is \textit{the vakonomic
system associated with the optimal control system}.

If $\A=J^1\tau$ is the 1-jet bundle of local sections of a
fibration $\tau:Q\to\R$, it is easy to prove that the prolongation
of $\widetilde{J^1\tau}\cong TQ$ over $\pi :C\to Q$ is just $TC$.
Thus, $\J^\A C = \{ X\in TC\,|\, dt(X)=1\} \cong J^1(\tau \comp
\pi)$, $t$ being the usual coordinate on $\R$. Under these
identifications, the constraint submanifold is
\[
\M = \{ X \in TC \, | \, T\pi (X)=\sigma (\pi _C(X) ) \}.
\]
Therefore, we recover the construction developed in \cite{BLEE}.

\begin{example}
{\rm We consider the following mechanical problem (see
\cite{BKMM,CLMM,IMMS,LM,NF}). A (homogeneous) sphere of radius
$r>0$, mass $m$ and inertia $mk^2$ about any axis rolls without
sliding on a horizontal table which rotates with time-dependent
angular velocity about a vertical axis through one of its points.
Apart from the constant gravitational force, no other external
forces are assumed to act on the sphere. The configuration space
of the sphere is $Q=\R^3\times SO(3)$ and the Lagrangian of the
system corresponds with the kinetic energy
\[
K(t,x,y;\dot{t}, \dot{x}, \dot{y}, \omega_{x}, \omega_{y},
\omega_{z})=\frac{1}{2}(m\dot{{x}}^2+m\dot{{y}}^2 +
mk^2(\omega_{x}^2 + \omega_{y}^2 + \omega_{z}^2)),
\]
where $(\omega_{x}, \omega_{y}, \omega_{z})$ are the components of
the angular velocity of the sphere.

Since the ball is rolling without sliding on a rotating table then
the system is subjected to the affine constraints:
\[
\dot{{x}}-r\omega_{y}=-\Omega(t) y, \; \; \;  \dot{{y}} +r
\omega_{x}=\Omega(t) x,
\]
where $\Omega(t)$ is the angular velocity of the table. Moreover,
it is clear that $Q=\R^3\times SO(3)$ is the total space of a
trivial principal $SO(3)$-bundle over $\R^3$ and the bundle
projection $\pi:Q\to \R^3$ is just the canonical projection on the
first factor. Therefore, we may consider the corresponding Atiyah
Lie algebroid $TQ/SO(3)$ over $\R^3$ (see \cite{LMM,Ma}).

Since the Atiyah Lie algebroid $TQ/SO(3)$ is isomorphic to the
product manifold $T\R^3\times {\mathfrak {so}}(3)\cong T\R^3\times
\R^3$, then a section of $TQ/SO(3)\cong T\R^3\times \R^3 \to \R^3$
is a pair $(X,u)$, where $X$ is a vector field on $\R^3$ and
$u:\R^3\to \R^3$ is a smooth map. Therefore, a global basis of
sections of $T\R^3\times \R^3\to \R^3$ is
\[\Big\{e_0=(\displaystyle\frac{\partial}{\partial {t}},0),\,
e_1=(\displaystyle\frac{\partial}{\partial {x}},0),\,
e_2=(\displaystyle\frac{\partial}{\partial y},0),\, e_3=(0,u_1),\,
e_4=(0,u_2),\, e_5=(0,u_3)\Big\},
\]
where $u_1, u_2, u_3:\R^3\to\R^3$ are the constant maps
$u_1(t,x,y)=(1,0,0)$, $u_2(t,x,y)=(0,1,0)$ and
$u_3(t,x,y)=(0,0,1)$.

The anchor map $\rho_{TQ/SO(3)}: T\R^3\times \R^3\to T\R^3$ is the
projection over the first factor and if $\lcf\cdot, \cdot
\rcf_{TQ/SO(3)}$ is the Lie bracket on the space
$\Gamma(\tau_{TQ/SO(3)})$ then the only non-zero fundamental Lie
brackets are
\[
\lcf e_3,e_4\rcf_{TQ/SO(3)}=e_5,\;\;\;\lcf
e_4,e_5\rcf_{TQ/SO(3)}=e_3,\;\;\; \lcf e_5,e_3\rcf_{TQ/SO(3)}=e_4.
\]

Moreover, $\phi: T\R^3\times\R^3\to \R$ given by $\phi(t, x, y;
\dot{t}, \dot{x}, \dot{y}, \omega_x, \omega_y, \omega_z)=\dot{t}$
is a 1-cocycle in the co\-rres\-ponding Lie algebroid cohomology
and, then, it induces a Lie affgebroid structure over ${\mathcal
A}=\phi^{-1}\{1\}\equiv \R\times T\R^2\times \R^3$. In addition,
the affine bundle $\tau_{\mathcal A}: \R\times T\R^2\times
\R^3\rightarrow \R^3$ is modelled on the vector bundle
$\tau_V:V=\phi^{-1}\{0\}\equiv \R\times T\R^2\times \R^3\to \R^3$
and its bidual Lie algebroid $\bidual$ is just the Atiyah Lie
algebroid $T\R^3\times\R^3$. Note that the Lie affgebroid
structure on ${\mathcal A}= \R\times T\R^2\times \R^3$ is a
special type of Lie affgebroid structure called Atiyah Lie
affgebroid structure (see Section 9.3.1 in \cite{IMPS} for a
general construction). Thus, $(t, x, y; \dot{x}, \dot{y},
\omega_x, \omega_y, \omega_z)$ may be considered as local
coordinates on ${\mathcal A}$ and $V$.

It is clear that the Lagrangian function and the nonholonomic
constraints are defined on the Atiyah Lie affgebroid ${\mathcal
A}\equiv \R\times T\R^2\times\R^3$ (since the system is
$SO(3)$-invariant). In fact, we have a nonholonomic system on the
Atiyah Lie affgebroid ${\mathcal A}\equiv \R\times
T\R^2\times\R^3$ (see \cite{IMMS} for more details).

After some computations the equations of motion for this
nonholonomic system may be written as follows
\begin{equation}\label{qweafg}
\dot{{x}}-r\omega_{y}=-\Omega(t)y , \; \; \; \dot{{y}} +r
\omega_{x}=\Omega(t)x, \; \; \; \omega_{z}=c,
\end{equation}
where $c$ is a constant, together with
$$\ddot{{x}}+\displaystyle\frac{k^2}{k^2+r^2}(\Omega'(t)y+\Omega(t)\dot{{y}})=0,
\;\;\;\ddot{{y}}-\frac{k^2}{k^2+r^2}(\Omega'(t)x+\Omega(t)\dot{{x}})=0.$$
Now, we pass to an optimization problem. Assume full control over
the motion of the center of the sphere and consider the cost
function
\[
L(t,x,y; \dot{{x}}, \dot{{y}}, \omega_{x}, \omega_{y},
\omega_{z})=\frac{1}{2}\left( (\dot{{x}})^2+(\dot{{y}})^2\right)
\]
and the following optimal control problem: {\sl Given two points
$q_0, q_1\in Q$, find an optimal control curve $(t,{x}(t),
{y}(t))$ on the reduced space that steer the system from $q_0$ and
$q_1$ and minimizes $\int_0^1 \frac{1}{2}\left(
(\dot{{x}})^2+(\dot{{y}})^2\right)\, dt$ subject to the
constraints defined by equations (\ref{qweafg}). }

Note that this problem is equivalent to the optimal control
problem defined by the section $\sigma: \R^3\times \R^2\to\R\times
T\R^2\times \R^3$ along $\R\times T\R^2\times \R^3\to \R^3$ given
by
\[
\sigma(t,x, y; u^1, u^2)=(t,x, y; u^1,
u^2,\frac{1}{r}(-{u^2}+\Omega(t) {x}),\frac{1}{r}(
{u^1}+\Omega(t)y),c)
\]
and the index function $l(t,x, y; u^1,
u^2)=\frac{1}{2}((u^1)^2+(u^2)^2)$. Since $\sigma$ is obviously an
embedding, we deduce the equivalence between both problems.

A necessary condition for optimality of the problem is given by
the corresponding vakonomic equations. In this case, we will
denote by
\[
y^1=\dot{x}, \ y^2=\dot{y}, \ y^3=\omega_{x},\ y^4=\omega_{y}, \
y^5=\omega_{z}.
\] Therefore, the vakonomic problem is given by the Lagrangian
$$L(t,x,y;y^1,y^2,y^3,y^4,y^5)=\displaystyle\frac{1}{2}\left( (y^1)^2+({y}^2)^2 \right)$$
and the submanifold $\M$ defined by the constraints
\[
\begin{array}{rcl}
y^3&=&\Psi^3(t,{x},
{y},y^1,y^2)=\displaystyle\frac{1}{r}(-{y}^2+\Omega(t)x),\\[8pt]
y^4
&=&\Psi^4(t,{x},y,y^1,y^2)=\displaystyle\frac{1}{r}(y^1+\Omega(t)y),\\[8pt]
y^5&=&\Psi^5(t,{x},y,y^1,y^2)=c.
\end{array}
\]
After some computations, we obtain that the vakonomic equations
are
$$\left\{ \begin{array}{rcl}
\dot{y}_3&=&-\displaystyle\frac{1}{r}({y^1}+\Omega(t)y)y_5+cy_4,\\[8pt]
\dot{y}_4&=&-\displaystyle \frac{1}{r}({y^2}-\Omega(t)x)y_5-cy_3,\\[8pt]
\dot{y}_5&=&
\displaystyle\frac{1}{r}({y^1}+\Omega(t)y)y_3-\frac{1}{r}(-{y^2}+\Omega(t)x)y_4,\\[6pt]
\displaystyle\frac{d}{dt}\left(ry^1-{y_4}\right)&=&-{\Omega}(t)y_3,\\[8pt]
\displaystyle\frac{d}{dt}\left(ry^2+{y_3}\right)&=&-{\Omega}(t)y_4,\\[6pt]
y^1=\dot{x}\, , && y^2=\dot{y}\; .
\end{array}\right .$$
Moreover, it is easy to prove that the vakonomic system is
regular. Therefore, there exists a unique solution of the
vakonomic equations on the submanifold $W'_1$ which is determined
by the following conditions
$$\begin{array}{l}
y_1=\displaystyle\frac{\partial {L}}{\partial y^1}-y_{A}\frac{\partial \Psi^{A}}{\partial y^1}=y^1-\frac{1}{r}y_4,\\
y_2=\displaystyle\frac{\partial L}{\partial
y^2}-y_{A}\frac{\partial
\Psi^{A}}{\partial y^2}=y^2+\frac{1}{r}y_3,\\
y_0={L}-y_A\Psi^A\kern-1pt-y_ay^a\kern-1pt=-\displaystyle\frac{1}{2}(y_1\kern-1pt+\frac{1}{r}y_4)^2\kern-1pt-
\frac{1}{2}(y_2\kern-1pt-\frac{1}{r}y_3)^2\kern-1pt-\frac{\Omega(t)}{r}(xy_3\kern-1pt+yy_4)-cy_5.
\end{array}$$

Thus, it follows that $(t,x,y; y_1, y_2, y_3, y_4, y_5)$
(respectively, $(t,x,y;y_0, y_1, y_2,$ $ y_3, y_4,$ $ y_5)$) are
local coordinates on $W'_1$ (respectively, on $W_1$). Then, the
local expression of the Hamiltonian $H_{W_1}$ is
$$\begin{array}{rcl}
H_{W_1}(t,x,y;y_0,y_1,y_2,y_3,y_4,y_5)&=&-y_0-\displaystyle\frac{1}{2}(y_1+\frac{1}{r}y_4)^2-\frac{1}{2}(y_2-\frac{1}{r}y_3)^2\\[8pt]
&&-\displaystyle \frac{\Omega(t)}{r}(xy_3+yy_4)-cy_5 \end{array}$$
and, in terms of the affine-linear part
$\{\cdot,\cdot\}_{vak}^{al}$ of the vakonomic bracket
$\{\cdot,\cdot\}_{vak}$ associated with the regular system
$(L,\M)$, the vakonomic equations are
$$\left\{ \begin{array}{l}
\dot{y}_1= \{h_1, y_1
\}_{vak}^{al}=-\displaystyle\frac{\Omega(t)}{r}y_3,\;\;\;
\dot{y}_2=\{h_1, y_2
\}_{vak}^{al}=-\displaystyle\frac{\Omega(t)}{r}y_4,\\[8pt]
\dot{y}_3= \{h_1, y_3
\}_{vak}^{al}=-\displaystyle\frac{1}{r}\left(y_1+\displaystyle\frac{1}{r}y_4+{\Omega(t)
y}\right)y_5+cy_4,\\[8pt]
\dot{y}_4= \{h_1,
y_4\}_{vak}^{al}=\displaystyle\frac{1}{r}\left(-y_2+\frac{1}{r}y_3+{\Omega(t)
x}\right)y_5-cy_3,\\[8pt]
\dot{y}_5= \{h_1,y_5\}_{vak}^{al}=\displaystyle\frac{1}{r}\Big[
\left(y_1\kern-0.5pt+\frac{1}{r}y_4+{\Omega(t)y}\right)y_3+\left(y_2-\frac{1}{r}y_3-{\Omega(t){x}}\right)y_4
\Big],\\[8pt]
\dot{x}= \{
h_1,{x}\}_{vak}^{al}=y_1+\displaystyle\frac{1}{r}y_4,\;\;\;
\dot{y}= \{h_1,y\}_{vak}^{al}=y_2-\displaystyle\frac{1}{r}y_3.
\end{array}\right .$$

}\hfill$\triangle$
\end{example}
\subsubsection{Optimal control of affine control systems}\label{subsection-affine}

Let $\tau_{\mathcal A}: {\mathcal A} \to Q$ be a Lie affgebroid.
Suppose that the constraint submanifold $\M$ of the vakonomic
system is an affine subbundle $\B$ of ${\mathcal A}$, that is, we
have an affine bundle $\B$ over $Q$ with associated vector bundle
$\tau_{U_\B}:U_\B\to Q$ and the corresponding inclusions
$i_\B:\B\to\A$ and $i_{U_\B}:U_\B\to V$. Choose now a coordinate
system adapted to this
 affine subbundle $\B$. That is, take local
coordinates $(x^i)$ on an open subset $U$ of $Q$ and an
 local basis of $\Gamma(\tau_{V})$, $\{e_{A}, e_{a}\}$,
adapted to the decomposition $V=\tilde{U}_\B\oplus U_\B$, where
$\tilde{U}_\B$ is an arbitrary  complementary subspace.  Thus,
$\{e_{0}, e_{A}, e_{a}\}$ is an local basis of
$\Gamma(\tau_{\widetilde{\mathcal A}})$ adapted to the affine
subbundle $\B$, where $1_{\mathcal A}(e_0)=1$. Denote by $(x^i,
y^0, y^A, y^a)$ the corresponding local coordinates on
$\widetilde{\mathcal A}$ and by $(x^i, y_0, y_A, y_a)$ the dual
local coordinates on ${\mathcal A}^+$. Note that the equations
which define $\B$ as an affine subbundle of ${\mathcal A}$ are
$y^A = 0$.

The affine control problem given by the  \emph{drift section}
$e_0$ and the  \emph{input sections} $e_a$ is defined by the
following equation on $Q$, $\dot{x}^i=\rho^i_0+y^a \rho^i_a$,
where the coordinates $y^a$ are playing the role of the \emph{set
of admissible controls}.

Now, consider a function $\tilde{L}: \B\longrightarrow \R$ as a
performance index. The equations of motion of the optimal control
problem defined by $(\tilde{L}, \B)$ are precisely the vakonomic
equations. In the selected coordinate system are:
$$\left \{
\begin{array}{l}
\displaystyle \dot{x}^i=\rho_0^i+y^a \rho ^i_a ,\\[4pt]
\displaystyle \dot{y}_{A}= \rho ^i_A\frac{\partial
\tilde{L}}{\partial x^i}  -y_\gamma(C^{\gamma}_{A0}+y^aC^\gamma_{Aa}) ,\\[8pt]
\displaystyle \frac{d}{dt}\Big( \frac{\partial \tilde{L}}{\partial
y^a}\Big)=  \rho ^i_{a}\frac{\partial \tilde{L}}{\partial x^i}
-y_\gamma( C^{\gamma}_{a0}+y^bC^\gamma_{ab}),
\end{array}
\right.$$ for all $1\leq i\leq m$, $1\leq \gamma\leq n$, $1\leq
A\leq\bar{m}$ and $\bar{m}+1\leq a\leq n$, with
$y_a=\frac{\partial \tilde{L}}{\partial y^a}$.

\begin{example}
{\rm

Consider a particle of unit mass in a planar inverse-square law
gravitational field which has thrusters in the ``x, y" directions
(see \cite{GrMa}). Then, the equations of motion are:
$$\dot{q}_1=v_1,\quad \dot{q}_2=v_2,\quad \dot{v}_1=-q_1(q_1^2+q_2^2)^{-3/2}+u_1,\quad
\dot{v}_2=-q_2(q_1^2+q_2^2)^{-3/2}+u_2$$ defined on
$M=(\R^2-\{(0,0)\})\times \R^2$. The objective will be to drive
the particle to a given circular orbit with minimum energy.
Therefore, let us take $L=\frac{1}{2}(u_1^2+u_2^2)$.

Now, choose a global basis of sections of $T(\R\times
M)\longrightarrow \R\times M$ adapted to this affine control
system:
\begin{eqnarray*}
&e_0=\frac{\partial }{\partial t}+v_1\frac{\partial}{\partial q_1}+v_2\frac{\partial}{\partial q_2} -q_1(q_1^2+q_2^2)^{-3/2}\frac{\partial}{\partial v_1} -q_2(q_1^2+q_2^2)^{-3/2}\frac{\partial}{\partial v_2}&\\
&e_1=\frac{\partial}{\partial q_1},\ e_2=\frac{\partial}{\partial
q_2}, e_3=\frac{\partial}{\partial v_1}, \
e_4=\frac{\partial}{\partial v_2}
\end{eqnarray*}
where $\{e_0; e_3, e_4\}$ defines a affine subbundle of $\R\times
TM\longrightarrow \R\times M$ determining the initial affine
control system.

Applying the result developed in Subsection
\ref{subsection-affine} and denoting by   $y_3=u_1$ and $y_4=u_2$
then the  equations of motion are now:
\begin{eqnarray*}
&\dot{q}_1=v_1,\qquad \dot{q}_2=v_2&\\
&\dot{v}_1=-q_1(q_1^2+q_2^2)^{-3/2}+u_1,\qquad \dot{v}_2=-q_2(q_1^2+q_2^2)^{-3/2}+u_2&\\
&\dot{y}_1=-\left(u_1(2q_1^2-q_2^2)+3u_2q_1q_2\right)(q_1^2+q_2^2)^{-5/2}&\\
&\dot{y}_2=-\left(3u_3q_1q_2+u_4(2q_2^2-q_1^2)\right)(q_1^2+q_2^2)^{-5/2}&\\
&\dot{u_1}=-y_1\qquad \dot{u_2}=-y_2 .&
\end{eqnarray*}
}\hfill$\triangle$

\end{example}

\section{Conclusions and future work}
Variational constrained Mechanics is discussed in the Lie
affgebroid setting. We obtain the vakonomic equations and the
vakonomic bracket associated with a constrained mechanical system
on a Lie affgebroid. The variational character of the theory is
analyzed. Vakonomic systems subjected to affine constraints are of
special interest. Other examples are also discussed.

In this paper we only consider normal solutions of the vakonomic
problems. It would be interesting to extend the results of the
paper for abnormal solutions.

\section*{Acknowledgments}
This work has been partially supported by MEC (Spain) Grants MTM
2006-03322, MTM 2007-62478, project ``Ingenio Mathematica"
(i-MATH) No. CSD 2006-00032 (Consolider-Ingenio 2010) and
S-0505/ESP/0158 of the CAM.

\end{document}